\documentclass[11pt]{extarticle}

\usepackage{amssymb,amsfonts,amsthm}
\usepackage{amsmath}
\usepackage{ytableau}
\usepackage[top=1in, bottom=1.25in, left=0.75in, right=0.75in]{geometry}
\usepackage{authblk}
\usepackage{hyperref}

\newtheorem{theorem}{Theorem}[section]
\newtheorem{lemma}[theorem]{Lemma}
\newtheorem{corollary}[theorem]{Corollary}

\theoremstyle{definition}
\newtheorem{definition}[theorem]{Definition}
\newtheorem{proposition}[theorem]{Proposition}

\theoremstyle{remark}
\newtheorem{remark}[theorem]{Remark}

\numberwithin{equation}{section}

\newcommand{\C}{\mathbb{C}}

\newcommand{\Sn}{\mathfrak S}

\DeclareMathOperator{\Span}{span}

\newcommand{\ip}[2]{\left\langle #1,\,#2\right\rangle}

\newcommand{\cV}{\mathcal V}
\newcommand{\cH}{\mathcal H}
\newcommand{\cE}{\mathcal E}
\newcommand{\cW}{\mathcal W}

\newcommand{\ket}[1]{\left|#1\right\rangle}

\newcommand{\proj}[1]{\left|#1\right\rangle\!\left\langle#1\right|}
\newcommand{\TIG}{T^{\mathrm{IG}}}
\newcommand{\TiSWAP}{T^{\mathrm{iSWAP}}}
\newcommand{\spec}{\operatorname{spec}}

\newcommand{\lamstar}{\lambda_{\!*}}

\hypersetup{
  colorlinks=true,
  linkcolor=blue,
  citecolor=blue,
  urlcolor=blue
}

\begin{document}

\title{Spectral gaps of ironed two-qubit gadgets matching the iSWAP gap}

\author{Yanying Liang\thanks{College of Mathematics and Informatics, South China Agricultural University, Guangzhou, 510640, China. Email: yyl2022@scau.edu.cn}
\thanks{Institute of Advanced Intelligence and Computing (IAIC), Agency for Science, Technology and Research (A*STAR), 1 Fusionopolis Way, \#16-16 Connexis, Singapore 138632, Republic of Singapore}
\qquad Haoran Zhu\thanks{Division of Mathematical Sciences, Nanyang Technological University, Singapore 637371, Singapore. Email: zhuh0031@e.ntu.edu.sg}}

\date{}

\maketitle

\begin{abstract}
We prove that every ironed two-qubit gadget whose KAK-derived parameter satisfies
$a=5/9$ has, on the complete graph $K_n$ with $n\geqslant 5$, the same
second-moment spectral gap as the iSWAP gadget.  
The central step is a representation-theoretic localisation theorem: the largest strictly negative eigenvalue of the associated $\Sn_n$-invariant operator always occurs in the highest-spin $\mathrm{SU}(2)$ summand. A local positive-semidefinite decomposition separates every spin sector except the two highest. This settles a conjecture of Kong, Li, and Liu.  
\end{abstract}

\medskip
\noindent\textit{Keywords:}
Unitary designs; spectral gaps; Schur--Weyl
duality; Jacobi matrices; Sturm sequences.

\noindent\emph{Mathematics Subject Classification:} 81P45, 05C50, 15A18, 60J10

\section{Introduction}

Exact unitary designs reproduce prescribed Haar moments, while approximate
unitary designs reproduce them up to controlled error.  They therefore
provide finite or efficiently generated substitutes for Haar-random
unitaries in applications that depend on finitely many moments.  Early
systematic work on exact and approximate unitary designs includes the
constructions of Dankert, Cleve, Emerson and Livine~\cite{DankertCleveEmersonLivine}
and Gross, Audenaert and Eisert~\cite{GrossAudenaertEisert}.  Random quantum
circuits give a particularly natural source of approximate designs: rather
than sampling from a global unitary group, one composes local random gates.
Quantitative convergence results for these circuits were established for
second moments by Harrow and Low~\cite{HarrowLow}, with a correction to a key
argument by Diniz and Jonathan~\cite{DinizJonathan}; for higher moments by
Brand\~ao, Harrow and Horodecki~\cite{BrandaoHarrowHorodecki}; and subsequently
with substantially improved depth bounds by Haferkamp~\cite{Haferkamp}.

For a Hermitian moment operator, the rate of convergence is governed by a
spectral gap.  This observation turns a problem about random circuits into
a problem in finite-dimensional spectral theory.  The resulting matrices
are nevertheless large: even at the second moment, the ambient space grows
exponentially with the number of qubits.  Obtaining a sharp or exact gap
therefore requires exploiting the symmetries of the circuit architecture
and of the local gate ensemble.

The moment-operator viewpoint has been developed through several
complementary Hamiltonian and statistical-mechanical correspondences.
Brown and Viola~\cite{BrownViola} mapped arbitrary fixed moments of all-to-all random
circuits to collective-spin Hamiltonians, whilst
Hunter-Jones~\cite{HunterJones} expressed design formation through a lattice partition
function.  Geometry-sensitive depth estimates were
obtained by Harrow and Mehraban~\cite{HarrowMehraban}, and improved gap
bounds for large local dimension and all-to-all interactions were proved
by Haferkamp and Hunter-Jones~\cite{HaferkampHunterJones}.  More recently,
Deneris et al.~\cite{DenerisBermejoBracciaCincioCerezo} obtained exact
second-moment gaps for one-dimensional Haar-random circuits, Yada et
al.~\cite{YadaSuzukiMitsuhashiYoshioka} proved that broad classes of
non-Haar random circuits form designs with only a system-size-independent
constant-factor overhead relative to their Haar-random counterparts, and
Baer and Haah~\cite{BaerHaah} established constant spectral gaps for
several other random-unitary circuit models.  These works concern different
architectures or local ensembles from the complete-graph ironed-gadget
model studied here.  In related work in preparation, Liang and
Zhu~\cite{LiangZhu2026} investigate the second-moment spectral gap for
random circuits generated by two-qubit gates.  These results reinforce a
recurring principle: sharp convergence estimates depend not only on
positivity or comparison inequalities, but also on identifying which
representation sector contains the decisive low-lying excitation.

Kong, Li, and Liu~\cite{KongLiLiu} introduced the \textbf{ironed gadget}
model to compare two-qubit entangling gates after averaging away local
one-qubit degrees of freedom.  A fixed two-qubit unitary is preceded and
followed by independent Haar-random one-qubit gates.  Since two-qubit
unitaries are classified up to local gates by their KAK coordinates
\cite{KhanejaBrockettGlaser,ZhangValaSastryWhaley}, the second-moment
operator of the gadget is described by three scalar quantities, denoted by
$a,b,c$, subject to the relation $b=(1-a)/2$.  Placing the same gadget on a
graph then produces a local random circuit whose one-step moment operator
is the average over the edges.

On the complete graph, the permutation symmetry is strong enough to make
Schur--Weyl duality decisive.  Kong, Li, and Liu showed that the restriction
of the second-moment operator to its natural invariant subspace has the
form
\begin{equation*}\label{eq:intro-decomposition}
 I+(1-a)A_n+cB_n,
\end{equation*}
where $A_n$ and $B_n$ commute with the action of the symmetric group
$\Sn_n$.  The operator $B_n$ is central on each symmetric-group isotypic
component.  The operator $A_n$ is more delicate: after Schur--Weyl
decomposition, it becomes a family of tridiagonal matrices indexed by total
spin.

The iSWAP, $B$-gate, and CNOT families all satisfy $a=5/9$, but they have
different values of $c$.  For the complete graph \(K_n\) with \(n\geqslant5\),
Kong, Li, and Liu conjectured that every ironed gadget with KAK parameter
\(a=5/9\) has the same second-moment spectral gap as the iSWAP gadget
\cite[Conjecture~2]{KongLiLiu}. In the current paper, we confirm this conjecture; see the following theorem.



\begin{theorem}
\label{thm:intro-main}
Let $n\geqslant 5$, and let $\TIG_{2,K_n}$ be the second-moment operator of an
ironed two-qubit gadget whose KAK parameter is $a=5/9$.  Then
\begin{equation}\label{eq:intro-main}
 \Delta\!\left(\TIG_{2,K_n}\right)
 =\Delta\!\left(\TiSWAP_{2,K_n}\right).
\end{equation}
\end{theorem}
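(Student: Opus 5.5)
We reduce everything to the spectral theory of the operator $M(c)=I+(1-a)A_n+cB_n$ with $a=5/9$ fixed. The only gate-dependent quantity is then $c$. The gap is $\Delta=1-\lambda_{\max}^{\neq 1}$, where the maximum runs over eigenvalues of the restricted moment operator other than the trivial eigenvalue $1$. Since $A_n$ and $B_n$ commute with $\Sn_n$, we first apply Schur--Weyl duality. This decomposes the invariant subspace into isotypic components labelled by total spin $j$ (two-row Young diagrams). On each component $B_n$ acts as a scalar $\beta_j$, and $A_n$ acts as a real symmetric tridiagonal (Jacobi) matrix $J_j$. Thus
\[
\spec M(c)=\bigcup_j \bigl(1+\tfrac49\spec J_j + c\,\beta_j\bigr).
\]
The nontrivial spectrum is therefore controlled by the quantity $(1-a)A_n+cB_n$, whose relevant eigenvalues are strictly negative. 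The claim $\Delta(\TIG)=\Delta(\TiSWAP)$ says that the least-negative such eigenvalue does not depend on $c$ over the admissible range, which we obtain from the KAK constraints with $a=5/9$.

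The plan is to prove a localisation theorem: the largest strictly negative eigenvalue always occurs in the highest-spin summand $j=n/2$. This is the symmetric subspace, on which we expect $\beta_{n/2}=0$ or a value making that sector independent of $c$. Once localisation holds, the gap equals $-\tfrac49\lambda_{\max}^{<0}(J_{n/2})$. This expression is manifestly the same for every gadget with $a=5/9$, in particular for iSWAP, and the theorem follows.

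We carry out localisation in two stages. First, for all spin sectors except the two highest ($j\le n/2-2$), we write $\lambda_* I-\bigl((1-a)A_n+cB_n\bigr)$ as a sum over edges of local positive-semidefinite terms. Here $\lamstar$ is the candidate top negative eigenvalue from $j=n/2$. Each edge term acts on the two-qubit second-moment space, and we check its positivity by a small explicit matrix computation, uniformly in the admissible $c$. Summing over edges and projecting onto lower spin sectors gives the bound $\spec\le\lambda_*$ there, with the slack coming from the central term $cB_n$, which penalises lower spin. Second, for the two highest sectors $j=n/2$ and $j=n/2-1$, the local argument is too lossy. There we compare the Jacobi matrices $J_{n/2}$ and $J_{n/2-1}+c\,\beta_{n/2-1}$ directly. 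We write explicit entries (affine in the spin parameters), compute characteristic polynomials via three-term recurrences, and use Sturm sequences to show that $J_{n/2-1}$ shifted by $c\beta_{n/2-1}$ has no eigenvalue in $(\lambda_*,0)$. This uses $n\ge5$, which is presumably where small cases fail.

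The main obstacle is this second stage. We must compare the top nontrivial eigenvalues of two Jacobi matrices of different sizes, uniformly in $n$ and in the worst-case $c$. The first approach is an interlacing argument: realise $J_{n/2-1}$ as a rank-one or diagonal perturbation of a principal submatrix of $J_{n/2}$ and control the sign of the perturbation. If that fails, the fallback is to evaluate the Sturm sequence of $J_{n/2-1}$ at $\lambda_*$ and show inductively that the sign pattern has the right number of sign changes, with the base cases $n=5,6$ checked by hand.
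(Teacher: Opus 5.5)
Your architecture matches the paper's: Schur--Weyl reduction, $B_n$ scalar and zero on the top sector, a local positive-semidefinite bound for $r\geqslant2$, then a Jacobi comparison of the two highest-spin blocks. But you do not supply the decisive comparison of $r=0$ with $r=1$, and both routes you suggest for it fail.

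Interlacing is blocked by the second zero mode of the top block. $H_{n,0}^{\circ}$ has the kernel vector $\phi_q\propto\sqrt{3^q\binom nq}$. By Cauchy interlacing, the lowest eigenvalue of its leading $(n-1)\times(n-1)$ submatrix therefore lies in $[0,\lambda_{\min}^{+}(H_{n,0}^{\circ})]$. Any comparison of $H_{n,1}$ routed through that submatrix lands on the wrong side of the target. Moreover, $H_{n,1}-(H_{n,0}^{\circ})^{[n-1]}$ has zero diagonal and nonzero off-diagonal entries, so it is indefinite and neither rank-one nor diagonal.

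Evaluating the Sturm sequence of $H_{n,1}$ at the target $\lambda_{\min}^{+}(H_{n,0}^{\circ})$ is not feasible either. That number is a root of a degree-$n$ polynomial with no closed form, so there is nothing explicit on which to induct in $n$.

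The missing idea is an explicit separating threshold. The paper takes $(3n-5)/N_n$ and proves two things:
\begin{itemize}
\item $\lambda_{\min}^{+}(H_{n,0}^{\circ})$ lies below it, by a two-dimensional Rayleigh--Ritz computation on $\Pi\,\Span\{\mathbf e_1^{(0)},\mathbf e_2^{(0)}\}$, where $\Pi$ projects off the explicit vector $\phi$.
\item $\lambda_{\min}(H_{n,1})$ lies above it for $n\geqslant9$, because all leading minors $D_k$ of $N_nH_{n,1}-(3n-5)I$ are positive, via the ratio bound $D_k/D_{k-1}>3(k-1)(n-k-1)$.
\end{itemize}
The cases $5\leqslant n\leqslant8$ need separately chosen rational thresholds and exact certificates, not just $n=5,6$.

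Two further points need repair. First, in stage one the slack cannot come from $cB_n$, because $c=0$ (CNOT) is admissible and is the worst case. Moreover, your $\lamstar I-((1-a)A_n+cB_n)$ equals $\lamstar<0$ on the fixed vectors of the top sector, so it is not a sum of positive-semidefinite edge terms. What works is a decomposition of $A_n$ itself:
\[
\mathsf h=-\mathsf a=\tfrac32\proj{\psi^-}+\tfrac52\proj{\eta}\succeq\tfrac32P^-.
\]
This averages to $H_n\succeq-\frac34B_n$ and gives $\lambda_{\min}(H_{n,r})\geqslant 3r(n-r+1)/(2N_n)\geqslant3/n$ for $r\geqslant2$. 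It still has to be paired with an explicit upper bound $\lambda_{\min}^{+}(H_{n,0}^{\circ})<3/n$, which again comes from the variational step.

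Second, you take the gap to be $1-\lamstar$ of the restricted operator. But $\Delta$ is defined on the full operator and also involves $|\lambda_{\min}|$. The paper handles this in three steps:
\begin{itemize}
\item It proves $0\leqslant c\leqslant\frac13$; the sign $c\geqslant0$ is exactly what makes $cB_n$ harmless on lower sectors.
\item It deduces $\lambda_{\min}(\TIG_2)\geqslant-\frac13$ from the local non-unit eigenvalues $-\frac19$ and $\frac13-2c$.
\item It invokes the Kong--Li--Liu complementary-subspace reduction, valid when $n\geqslant2/(1+\lambda_{\min}(\TIG_2))$.
\end{itemize}
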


The obstruction identified in~\cite{KongLiLiu} is precise.  The authors
observed that their conjecture would follow from a strengthening of their
highest-spin lemma: one must show that the largest strictly negative
eigenvalue of $A_n$ occurs in the highest-spin summand for every $n\geqslant
5$.  Their estimates prove the required statement after the additional
iSWAP contribution from $B_n$ is included, but do not compare $A_n$
itself sharply enough.  Moreover, numerical evidence in~\cite{KongLiLiu}
indicates that the two highest-spin blocks approach one another as $n$
grows, so a coarse operator-norm estimate loses the relevant second-order
separation.

We prove the required strengthening.  To state it, write the Schur--Weyl
decomposition of the effective two-dimensional local space as
\begin{equation*}\label{eq:intro-SW}
 \cV^{\otimes n}
 \cong
 \bigoplus_{r=0}^{\lfloor n/2\rfloor}
 \cW_{n,r}\otimes S^{(n-r,r)},
\end{equation*}
where $\cW_{n,r}$ has total spin $(n-2r)/2$.  Put $H_n=-A_n$, write
$H_{n,r}$ for its restriction to $\cW_{n,r}$, and write
$\lambda_{\min}^{+}(M)$ for the smallest positive eigenvalue of a
positive-semidefinite matrix $M$.  Our spectral localisation theorem is as
follows.

\begin{theorem}\label{thm:intro-localisation}
For every $n\geqslant 5$ and every $1\leqslant r\leqslant\lfloor n/2\rfloor$,
\begin{equation}\label{eq:intro-localisation}
 \lambda_{\min}^{+}(H_{n,0})<\lambda_{\min}(H_{n,r}).
\end{equation}
Equivalently, the largest strictly negative eigenvalue of $A_n$ occurs in
the highest-spin summand $\cW_{n,0}$.
\end{theorem}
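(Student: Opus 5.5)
The plan is to work sector by sector after the Schur--Weyl decomposition. Each $H_{n,r}$ is a real symmetric tridiagonal (Jacobi) matrix indexed by a basis of the spin-$(n-2r)/2$ summand, for instance the weight vectors of $\cW_{n,r}$. The proof has three steps.

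\textbf{Step 1: a local positive-semidefinite decomposition.} Write $H_n=\sum_{i<j}h_{ij}$, where $h_{ij}$ is the two-site term coming from the edge $\{i,j\}$. I would look for a constant $\mu_n$ and a splitting $h_{ij}=g_{ij}+\mu_n q_{ij}$ with $g_{ij}\succeq 0$. Here $q_{ij}$ is built from the two-site projection onto the antisymmetric (spin-$0$) part of sites $i,j$, so that $\sum_{i<j}q_{ij}$ is a function of the Casimir. Summing over edges gives $H_{n,r}\succeq \mu_n\,c(n,r)\,I$, where $c(n,r)$ is the eigenvalue of $\sum q_{ij}$ on spin $(n-2r)/2$. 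For the projection onto the antisymmetric part this eigenvalue is $r(n-r+1)$, which is increasing in $r$. This yields an explicit lower bound $\lambda_{\min}(H_{n,r})\geqslant L(n,r)$, increasing in $r$.

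\textbf{Step 2: an upper bound on $\lambda^{+}_{\min}(H_{n,0})$.} On the symmetric subspace $\cW_{n,0}$ the matrix $H_{n,0}$ is an explicit Jacobi matrix in the Dicke basis. I would take a trial vector supported on the weights adjacent to the kernel direction and orthogonal to the kernel. The Rayleigh quotient then gives $\lambda^{+}_{\min}(H_{n,0})\leqslant U(n)$. Alternatively, a Sturm-sequence count of the characteristic polynomial at a chosen point $t$ shows that $H_{n,0}$ has a positive eigenvalue below $t$.

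\textbf{Step 3: comparing the bounds.} For $r\geqslant 2$, I expect $L(n,r)>U(n)$ to follow from an elementary polynomial inequality in $n$, checked directly for small $n\geqslant 5$. This is what makes the decomposition ``separate every spin sector except the two highest.''

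The main obstacle is $r=1$. Here the abstract indicates that $H_{n,0}$ and $H_{n,1}$ approach each other as $n$ grows, so the first-order bound of Step 1 cannot beat $U(n)$. For this case I would compare the two Jacobi matrices directly, using the raising and lowering structure of the $\mathrm{SU}(2)$ action to identify $H_{n,1}$ with a rank-perturbed compression of $H_{n,0}$. Concretely, I would apply Sturm sequences to $\det(H_{n,1}-tI)$ at $t=\lambda^{+}_{\min}(H_{n,0})$, or at an explicit $t$ lying between the two values. The goal is to show there are no sign changes, so that every eigenvalue of $H_{n,1}$ exceeds $t$, while the count for $H_{n,0}$ shows one positive eigenvalue below $t$. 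To do this uniformly in $n$, I would write the leading minors via a three-term recurrence and prove the sign pattern by induction on the index, with closed-form entries in $n$. The second-order separation should appear as an explicit $O(1/n^2)$ gap in these expressions. Small cases $n=5,6,7$ would be verified directly.
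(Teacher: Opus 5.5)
Your plan is essentially the paper's proof. The bound $\mathsf h\succeq\frac32P^-$ gives $H_{n,r}\succeq\frac{3r(n-r+1)}{2N_n}I\succeq\frac3nI$ for $r\geqslant2$. A Rayleigh--Ritz bound on $\Pi\,\Span\{\mathbf e_1^{(0)},\mathbf e_2^{(0)}\}$, where $\Pi$ projects off the zero mode $\phi$, places $\lambda_{\min}^{+}(H_{n,0})$ below the common threshold $(3n-5)/N_n<3/n$; the projection leaves the quadratic form unchanged because $H_{n,0}^{\circ}\phi=0$. Positivity of the leading minors of $N_nH_{n,1}-(3n-5)I$, proved from the three-term recurrence via the inductive ratio bound $D_k/D_{k-1}>3(k-1)(n-k-1)$, settles $r=1$ for $n\geqslant9$.

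The details you leave open are where care is needed:
\begin{itemize}
\item The one-coordinate vector $\Pi\mathbf e_1^{(0)}$ has Rayleigh quotient $\frac{3/n}{1-3n/(4^n-1)}>\frac3n$, so it cannot even beat the $r\geqslant2$ bound; the coupling to $\mathbf e_2^{(0)}$ is essential.
\item Orthogonality to $\phi$ should come from projecting, not from restricting the support. The only vector supported on $\{\mathbf e_1^{(0)},\mathbf e_2^{(0)}\}$ and orthogonal to $\phi$ has quotient $\frac{9n^2+6n-23}{(3n-1)N_n}>\frac3n$.
\item At this threshold one has $N_8\lambda_{\min}(H_{8,1})<19$, so the exact finite check must cover $n=8$ as well as $n=5,6,7$.
\end{itemize}
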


Firstly, the two-site term of $H_n$ admits a rank-two positive-semidefinite
decomposition.  One of its summands is the antisymmetric projection, and
this gives the uniform bound
\begin{equation*}\label{eq:intro-lowspin}
 \lambda_{\min}(H_{n,r})
 \geqslant \frac{3r(n-r+1)}{2n(n-1)}.
\end{equation*}
It follows immediately that every block with $r\geqslant 2$ has
ground-state energy at least $3/n$.  This leaves only the comparison
between $r=0$ and $r=1$.

Secondly, these two blocks are explicit Jacobi matrices. 
A two-dimensional Rayleigh--Ritz calculation, carried out after projecting
away the non-trivial zero mode of the highest-spin block, proves that
\begin{equation*}
 \lambda_{\min}^{+}(H_{n,0})<\frac{3n-5}{n(n-1)}
 \qquad (n\geqslant6).
\end{equation*}
Thirdly, a Sturm argument proves that
\begin{equation*}
 \lambda_{\min}(H_{n,1})>\frac{3n-5}{n(n-1)}
 \qquad (n\geqslant9).
\end{equation*}
The latter reduces to positivity of the leading principal minors of a
tridiagonal matrix.  
For $5\leqslant n\leqslant 8$, exact
rational Sturm sequences provide finite certificates.

It is noteworthy that, on a structural level, our findings are closely aligned in spirit with Aldous-type spectral-gap questions: a large operator arising from representation theory exhibits its crucial low-lying eigenvalue within a particular small representation (specifically, the standard representation). 
Aldous' conjecture for the interchange process was established by Caputo, Liggett, and Richthammer~\cite{CaputoLiggettRichthammer}.

Once Theorem~\ref{thm:intro-localisation} has been established, the parameter $c$ no longer plays a problematic role. 
At $a=5/9$ the admissible KAK region forces
$0\leqslant c\leqslant 1/3$, whilst $B_n$ vanishes on the highest-spin
block and is strictly negative on every lower-spin block.  Consequently,
increasing $c$ only lowers the competing eigenvalues.  The second-largest
eigenvalue on $\cV^{\otimes n}$ is therefore independent of $c$.  A
complementary-subspace reduction of Kong, Li, and Liu then
identifies it with the second-largest eigenvalue of the full moment
operator and shows that it, rather than the bottom of the spectrum,
determines the gap.

The paper is organised as follows.  In Section~\ref{sec:setup}, we review
second-moment operators and ironed gadgets, determine the admissible range
of $c$ when $a=5/9$, and record the complete-graph Schur--Weyl reduction.
In Section~\ref{sec:spin-blocks}, we derive the Jacobi matrices for all
spin sectors and use the local positive-semidefinite decomposition to
separate the sectors with $r\geqslant 2$.  In Section~\ref{sec:critical}, we
compare the two remaining Jacobi matrices by variational and Sturm
arguments and prove Theorem~\ref{thm:intro-localisation}.  In
Section~\ref{sec:main-proof}, we return to the moment operator and prove
Theorem~\ref{thm:intro-main}.  In Appendix~\ref{app:small}, we give the
exact finite-dimensional Sturm certificates used for
$5\leqslant n\leqslant 8$.

\section{Moment operators and the complete-graph reduction}\label{sec:setup}

\subsection{Second moments and spectral gaps}

Let
$\cH_n=(\C^2)^{\otimes n}$
be the Hilbert space of an $n$-qubit system.  If $\cE$ is an ensemble of
unitaries on $\cH_n$, its \textbf{second-moment operator} is
\begin{equation}\label{eq:moment-operator}
 T_2^{\cE}
 =\mathbb E_{U\sim\cE}
 \bigl(U^{\otimes 2}\otimes\overline U^{\otimes 2}\bigr).
\end{equation}
Equivalently, after vectorisation, this is the averaging operator on
$\operatorname{End}(\cH_n^{\otimes 2})$ induced by conjugation with
$U^{\otimes 2}$.  The \textbf{Haar second-moment operator} is the orthogonal
projection onto the corresponding commutant.


\begin{definition}\label{def:lambda-star}
Let $T$ be a Hermitian contraction whose unit eigenspace is non-zero and
whose spectrum contains at least one eigenvalue strictly below $1$.  We
write
\begin{equation}\label{eq:lambda-star}
 \lamstar(T)=\max\{\lambda\in\spec(T):\lambda<1\}
\end{equation}
for its largest eigenvalue strictly below $1$.  Its moment-operator
\textbf{spectral gap} is
\begin{equation}\label{eq:spectral-gap}
 \Delta(T)
 =1-\max\bigl\{\lamstar(T),\,|\lambda_{\min}(T)|\bigr\}.
\end{equation}
\end{definition}


\subsection{Ironed two-qubit gadgets}

An \textbf{ironed gadget} is obtained by placing a fixed two-qubit unitary between
two independent layers of Haar-random one-qubit gates.  More precisely,
if $U\in\mathrm U(4)$ is fixed, the unitary applied to a chosen pair of
qubits has the form
\begin{equation*}\label{eq:ironed-gadget}
 (A_1\otimes A_2)U(B_1\otimes B_2),
\end{equation*}
where $A_1,A_2,B_1,B_2$ are independent and Haar distributed in
$\mathrm U(2)$.  Local unitary factors in $U$ are absorbed by the Haar
variables.  We may therefore take the non-local part of $U$ in the KAK
form
\begin{equation}\label{eq:KAK}
 U(k_x,k_y,k_z)
 =\exp\!\left(
 i(k_x\sigma_x\otimes\sigma_x
  +k_y\sigma_y\otimes\sigma_y
  +k_z\sigma_z\otimes\sigma_z)
 \right),
\end{equation}
where $\sigma_x,\sigma_y,\sigma_z$ are the Pauli matrices.

For the second moment, the one-qubit Haar projector has a two-dimensional
range.  Let $I$ and $\mathsf F$ denote respectively the identity and the
swap on $(\C^2)^{\otimes 2}$.  With respect to the Hilbert--Schmidt inner
product, set
\begin{equation*}\label{eq:u0-u1}
 \mathbf u_0=\frac12 I,
 \qquad
 \mathbf u_1=\frac1{\sqrt3}\left(\mathsf F-\frac12 I\right),
 \qquad
 \cV=\Span\{\mathbf u_0,\mathbf u_1\}.
\end{equation*}
The local one-qubit Haar twirls are orthogonal projections onto $\cV$.
Consequently, the two-qubit gadget moment operator has range contained in
$\cV\otimes\cV$ and vanishes on its orthogonal complement.  The ordered
basis
\begin{equation*}\label{eq:local-basis}
 \mathbf u_0\otimes \mathbf u_0,
 \quad \mathbf u_0\otimes \mathbf u_1,
 \quad \mathbf u_1\otimes \mathbf u_0,
 \quad \mathbf u_1\otimes \mathbf u_1
\end{equation*}
identifies $\cV\otimes\cV$ with $\C^2\otimes\C^2$.  In this basis the
restriction of the local second-moment operator is~\cite[Eq.~(33)]{KongLiLiu}
\begin{equation}\label{eq:local-matrix}
 \left.\TIG_2\right|_{\cV\otimes\cV}=
 \begin{pmatrix}
 1&0&0&0\\
 0&1-c-3b&c&\sqrt3b\\
 0&c&1-c-3b&\sqrt3b\\
 0&\sqrt3b&\sqrt3b&a
 \end{pmatrix},
\end{equation}
where
\begin{align}
 a&=\frac19\bigl(6+xy+xz+yz\bigr),\label{eq:a-parameter}\\
 b&=\frac1{18}\bigl(3-xy-xz-yz\bigr)=\frac{1-a}{2},\label{eq:b-parameter}\\
 c&=\frac1{12}\bigl(3+xy+xz+yz-2(x+y+z)\bigr),\label{eq:c-parameter}
\end{align}
with
\begin{equation*}\label{eq:xyz}
 x=\cos(4k_x),\qquad y=\cos(4k_y),\qquad z=\cos(4k_z).
\end{equation*}

The function $xy+xz+yz$ is affine in each variable, so its minimum on
the cube $[-1,1]^3$ is attained at a vertex and equals $-1$.  Thus
$a\geqslant5/9$, and the boundary value $a=5/9$ is precisely the one singled
out in~\cite{KongLiLiu}.  The following observation will be
important throughout.

\begin{lemma}\label{lem:c-range}
Suppose that $a=5/9$.  Then
\begin{equation}\label{eq:c-range}
 0\leqslant c\leqslant\frac13.
\end{equation}
Moreover, the two non-unit eigenvalues of the matrix in
\eqref{eq:local-matrix}, regarded as an unordered pair, are
\begin{equation}\label{eq:local-nonunit-spectrum}
 -\frac19
 \qquad\text{and}\qquad
 \frac13-2c.
\end{equation}
In particular,
\begin{equation}\label{eq:local-min-lower}
 \lambda_{\min}(\TIG_2)\geqslant-\frac13.
\end{equation}
\end{lemma}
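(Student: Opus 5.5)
Two things need proving: the range of $c$ and the spectrum. Throughout, write $s=xy+xz+yz$ and $p=x+y+z$ with $x,y,z\in[-1,1]$.

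\textbf{Step 1: the constraint $a=5/9$.} By \eqref{eq:a-parameter}, $a=5/9$ is equivalent to $s=-1$. Substituting into \eqref{eq:c-parameter} gives
\[
c=\frac{1}{12}(2-2p)=\frac{1-p}{6}.
\]
So the claim $0\leqslant c\leqslant 1/3$ is equivalent to $-1\leqslant p\leqslant 1$ on the set $\{s=-1\}\cap[-1,1]^3$.

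\textbf{Step 2: bounding $p$.} I use the identity
\[
p^2=x^2+y^2+z^2+2s=x^2+y^2+z^2-2\leqslant 1,
\]
where the inequality holds because each square is at most $1$. Hence $|p|\leqslant1$, which gives \eqref{eq:c-range}. I expect this to be the only non-mechanical step, and the identity settles it at once. Both endpoints are attained, for instance at $(1,1,-1)$, where $p=1$ and $c=0$, and at $(-1,-1,1)$, where $p=-1$ and $c=1/3$. This is not needed for the lemma.

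\textbf{Step 3: the spectrum.} With $a=5/9$ we have $b=2/9$. The vector $e_1$ is an eigenvector with eigenvalue $1$. On the span of $e_2,e_3,e_4$, use the symmetry swapping $e_2$ and $e_3$.
\begin{itemize}
\item The antisymmetric vector $e_2-e_3$ has eigenvalue $1-c-3b-c=1/3-2c$.
\item The symmetric sector is spanned by $(e_2+e_3)/\sqrt2$ and $e_4$. There the matrix is
\[
\begin{pmatrix}1-3b&\sqrt6\,b\\ \sqrt6\,b&a\end{pmatrix}
=\begin{pmatrix}1/3&\sqrt6\cdot2/9\\ \sqrt6\cdot2/9&5/9\end{pmatrix}.
\]
Its trace is $8/9$. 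Its determinant is $5/27-24/81=-1/9$. The eigenvalues are therefore $1$ and $-1/9$.
\end{itemize}
So the non-unit eigenvalues are $-1/9$ and $1/3-2c$, as stated in \eqref{eq:local-nonunit-spectrum}. The multiplicity of $1$ rises when $c=-1/3$, but that value is excluded by the range, so this does not matter.

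\textbf{Step 4: the lower bound.} Since $c\leqslant1/3$, we get $1/3-2c\geqslant-1/3$. The other eigenvalues of the restriction are $1$, $1$ and $-1/9$. The full operator is zero on $(\cV\otimes\cV)^\perp$, and $0\geqslant-1/3$. Together these give \eqref{eq:local-min-lower}.
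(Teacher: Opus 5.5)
Your proof is correct and follows the paper's argument: the reduction to $xy+xz+yz=-1$, the formula $c=(1-(x+y+z))/6$, and the bound $(x+y+z)^2=x^2+y^2+z^2-2\leqslant1$ are exactly as in the paper. The only difference is in the spectrum, where you block-diagonalise using the $\mathbf u_0\otimes\mathbf u_1\leftrightarrow\mathbf u_1\otimes\mathbf u_0$ symmetry instead of expanding the characteristic polynomial $\frac1{27}(\lambda-1)^2(9\lambda+1)(6c+3\lambda-1)$ directly; this gives the same eigenvalues and also shows that $c$ enters only through the antisymmetric eigenvalue.
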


\begin{proof}
The equation $a=5/9$ is equivalent to
\begin{equation*}\label{eq:pair-sum-minus-one}
 xy+xz+yz=-1.
\end{equation*}
Substitution into \eqref{eq:c-parameter} gives
\begin{equation*}\label{eq:c-sum}
 c=\frac{1-(x+y+z)}6.
\end{equation*}
Since $x,y,z\in[-1,1]$, we have
\begin{equation*}\label{eq:sum-square}
 (x+y+z)^2
 =x^2+y^2+z^2+2(xy+xz+yz)
 =x^2+y^2+z^2-2
 \leqslant 1.
\end{equation*}
Thus $-1\leqslant x+y+z\leqslant 1$, and \eqref{eq:c-range} follows.

When $a=5/9$, equation \eqref{eq:b-parameter} gives $b=2/9$.
A direct determinant calculation in \eqref{eq:local-matrix} yields
\begin{equation*}\label{eq:local-characteristic}
 \det(\lambda I-\TIG_2)
 =\frac1{27}(\lambda-1)^2(9\lambda+1)(6c+3\lambda-1).
\end{equation*}
This proves \eqref{eq:local-nonunit-spectrum}.  Since
$1/3-2c\in[-1/3,1/3]$, the lower bound
\eqref{eq:local-min-lower} follows.
\end{proof}


It is useful to separate the two parameters 
 at the two-site
level.  Let $\mathsf a$ and $\mathsf b$ be the following operators on
$\cV\otimes\cV$:
\begin{equation}\label{eq:local-a-b}
 \mathsf a=
 \begin{pmatrix}
 0&0&0&0\\
 0&-\frac32&0&\frac{\sqrt3}{2}\\
 0&0&-\frac32&\frac{\sqrt3}{2}\\
 0&\frac{\sqrt3}{2}&\frac{\sqrt3}{2}&-1
 \end{pmatrix},
 \qquad
 \mathsf b=
 \begin{pmatrix}
 0&0&0&0\\
 0&-1&1&0\\
 0&1&-1&0\\
 0&0&0&0
 \end{pmatrix}.
\end{equation}
Equation \eqref{eq:local-matrix} is equivalently
\begin{equation*}\label{eq:local-affine-decomposition}
 \left.\TIG_2\right|_{\cV\otimes\cV}
 =I+(1-a)\mathsf a+c\mathsf b.
\end{equation*}
The second matrix is $\mathrm{SWAP}-I$ on the effective tensor product.
This affine decomposition clearly shows that the scalar $a$ governs a fixed, non-central interaction, while $c$ governs a permutation component.

For the complete graph $K_n$, one chooses an edge uniformly and applies
the gadget to its endpoints.  Put
\begin{equation*}\label{eq:Nn}
 N_n=n(n-1)=2|E(K_n)|.
\end{equation*}
The restriction of the resulting moment operator to $\cV^{\otimes n}$
is; see also~\cite[Eq.~(113)]{KongLiLiu}
\begin{equation}\label{eq:complete-restricted}
 \TIG_{2,K_n}\big|_{\cV^{\otimes n}}
 =I+(1-a)A_n+cB_n.
\end{equation}
To describe $A_n$ and $B_n$, let $X$ and $Z$ be the Pauli matrices on the
effective space $\cV$, in the ordered basis $(\mathbf u_0,\mathbf u_1)$,
and define the collective operators
\begin{equation*}\label{eq:collective}
 \rho(X)=\sum_{i=1}^n X_i,
 \qquad
 \rho(Z)=\sum_{i=1}^n Z_i.
\end{equation*}
Then
\begin{align}
 A_n={}&\frac2{N_n}\Bigg[
 \frac{\sqrt3}{4}\left(
 (n-1)\rho(X)
 -\frac12\bigl(\rho(X)\rho(Z)+\rho(Z)\rho(X)\bigr)
 \right)\notag\\
 &\hspace{4.6em}
 +\frac14\left((n-1)\rho(Z)+\rho(Z)^2\right)
 \Bigg]
 -\left(1+\frac1{2(n-1)}\right)I,
 \label{eq:A-definition}\\
 B_n={}&\frac2{N_n}\sum_{1\leqslant i<j\leqslant n}(ij)-I.
 \label{eq:B-definition}
\end{align}
Here $(ij)$ interchanges the $i$th and $j$th tensor factors of
$\cV^{\otimes n}$.

At the parameter value of interest, \eqref{eq:complete-restricted}
becomes
\begin{equation}\label{eq:restricted-a-five-nine}
 \TIG_{2,K_n}\big|_{\cV^{\otimes n}}
 =I+\frac49A_n+cB_n.
\end{equation}

\subsection{Schur--Weyl decomposition}

The commuting actions of $\mathrm{SU}(2)$ and $\Sn_n$ on
$\cV^{\otimes n}$ give the multiplicity-free Schur--Weyl decomposition
in the $\mathrm{SU}(2)\times\Sn_n$ sense
\begin{equation}\label{eq:Schur-Weyl}
 \cV^{\otimes n}
 \cong
 \bigoplus_{r=0}^{\lfloor n/2\rfloor}
 \cW_{n,r}\otimes S^{(n-r,r)}.
\end{equation}
Here $S^{(n-r,r)}$ is the Specht module indexed by the two-row partition
$(n-r,r)$, and $\cW_{n,r}$ is the irreducible $\mathrm{SU}(2)$ module of
spin
\begin{equation*}\label{eq:spin}
 j_{n,r}=\frac{n-2r}{2}
\end{equation*}
and dimension $n-2r+1$.  We refer to $r=0$ as the highest-spin sector and
to $r=1$ as the next-to-highest-spin sector.  Standard references for
Schur--Weyl duality and symmetric-group representations include
\cite{FultonHarris,GoodmanWallach,Sagan}.

The operator $A_n$ belongs to the enveloping algebra of the collective
$\mathrm{SU}(2)$ action, whereas $B_n$ belongs to the centre of the group
algebra of $\Sn_n$.  Hence both preserve every summand in
\eqref{eq:Schur-Weyl}.

\begin{lemma}\label{lem:B-scalar}
On the summand indexed by $r$ in \eqref{eq:Schur-Weyl}, the operator $B_n$
acts as
\begin{equation}\label{eq:B-scalar}
 B_n\big|_{\cW_{n,r}\otimes S^{(n-r,r)}}
 =-\frac{2r(n-r+1)}{N_n}I.
\end{equation}
In particular, $B_n$ vanishes in the highest-spin sector and is strictly
negative in every lower-spin sector.
\end{lemma}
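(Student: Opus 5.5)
The sum $\sum_{i<j}(ij)$ is a class sum of transpositions, so it lies in the centre of the group algebra $\C[\Sn_n]$. By Schur's lemma, on each summand $\cW_{n,r}\otimes S^{(n-r,r)}$ of \eqref{eq:Schur-Weyl} it acts as a scalar. Since $\Sn_n$ acts only on the second tensor factor, that scalar is the one by which the class sum acts on the Specht module $S^{(n-r,r)}$. The standard value of this scalar is the content sum
\[
\sum_{(i,j)\in\lambda}(j-i)=\sum_k\binom{\lambda_k}{2}-\sum_k\binom{\lambda'_k}{2},
\]
which is the eigenvalue of the sum of Jucys--Murphy elements $J_2+\cdots+J_n=\sum_{i<j}(ij)$. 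I would cite this from \cite{Sagan} or \cite{FultonHarris}; equivalently, it is $\binom n2\chi^\lambda((12))/\chi^\lambda(1)$.

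For $\lambda=(n-r,r)$ the contents are as follows. The first row contributes $0+1+\dots+(n-r-1)=\binom{n-r}{2}$. The second row contributes $-1+0+\dots+(r-2)=\binom{r}{2}-r$. The total is
\[
\frac{(n-r)(n-r-1)+r(r-1)}{2}-r=\frac{n(n-1)}{2}-r(n-r)-r=\frac{N_n}{2}-r(n-r+1).
\]
Multiplying by $2/N_n$ and subtracting the identity gives $-2r(n-r+1)/N_n$, which is \eqref{eq:B-scalar}.

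As an independent check that avoids the content formula, I would use the Schur--Weyl picture directly on the $\mathrm{SU}(2)$ side. On $\C^2\otimes\C^2$ the swap is $\tfrac12(I+\vec\sigma\cdot\vec\sigma)$. Summing over pairs gives
\[
\sum_{i<j}(ij)=\tfrac12\binom n2+\tfrac14\Bigl(\bigl|\textstyle\sum_i\vec\sigma_i\bigr|^2-3n\Bigr).
\]
With total spin $\vec S=\tfrac12\sum_i\vec\sigma_i$, we have $|\sum_i\vec\sigma_i|^2=4S(S+1)$, and on $\cW_{n,r}$ the spin is $S=j_{n,r}=(n-2r)/2$. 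Substituting and simplifying again gives $N_n/2-r(n-r+1)$; for example, $r=0$ yields $\binom n2$, as it should on the symmetric subspace.

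The final assertion follows at once. For $r=0$ the scalar is $0$. For $1\leqslant r\leqslant\lfloor n/2\rfloor$ we have $n-r+1\geqslant 1$, so $-2r(n-r+1)/N_n<0$.

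The only delicate point is the algebra: keeping the second-row contents and the $\tfrac14$ factors straight. There is no conceptual obstacle.
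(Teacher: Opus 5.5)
Your proof is correct and follows the paper's argument: the class sum of transpositions acts on $S^{(n-r,r)}$ by the content sum, which you correctly evaluate as $N_n/2-r(n-r+1)$ (your second-row value $\binom r2-r$ equals the paper's $r(r-3)/2$). Your Casimir cross-check via $\mathrm{SWAP}=\tfrac12(I+\vec\sigma\cdot\vec\sigma)$ with total spin $(n-2r)/2$ is also correct, and it gives a self-contained alternative that avoids citing the content formula.
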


\begin{proof}
The class sum of transpositions acts on $S^\lambda$ by the sum of the
contents of the Young diagram of $\lambda$.  For
$\lambda=(n-r,r)$, this scalar is
\begin{align*}
 \sum_{(i,j)\in\lambda}(j-i)
 &=\frac12\bigl((n-r)(n-r-1)+r(r-3)\bigr)=\frac12\bigl(N_n-2r(n-r+1)\bigr).
 \label{eq:content-sum}
\end{align*}
Substitution into \eqref{eq:B-definition} gives
\eqref{eq:B-scalar}.
\end{proof}


\section{Spin blocks of the complete-graph operator}\label{sec:spin-blocks}

We now study
\begin{equation}\label{eq:H-definition}
 H_n=-A_n.
\end{equation}
The sign is convenient because $H_n$ is positive semidefinite.  We shall
write $H_{n,r}$ for its restriction to the $\mathrm{SU}(2)$ factor
$\cW_{n,r}$ in \eqref{eq:Schur-Weyl}.

\subsection{Jacobi matrices for the spin sectors}

Fix $r$.  Choose the standard orthonormal weight basis of $\cW_{n,r}$ and
index it as
\begin{equation}\label{eq:weight-basis}
 \mathbf e_q^{(r)},\qquad r\leqslant q\leqslant n-r.
\end{equation}
The collective generators act by
\begin{align}
 \rho(Z)\mathbf e_q^{(r)}
 &= (n-2q)\mathbf e_q^{(r)},
 \label{eq:rhoZ-action}\\
 \rho(X)\mathbf e_q^{(r)}
 &=\sqrt{(q-r+1)(n-r-q)}\,\mathbf e_{q+1}^{(r)}+
 \sqrt{(q-r)(n-r-q+1)}\,\mathbf e_{q-1}^{(r)},
 \label{eq:rhoX-action}
\end{align}
where vectors outside the range in \eqref{eq:weight-basis} are understood
to be zero.

\begin{proposition}\label{prop:Jacobi-blocks}
In the basis \eqref{eq:weight-basis}, the matrix $H_{n,r}$ is tridiagonal.
Its diagonal entries are
\begin{equation}\label{eq:general-diagonal}
 \bigl(H_{n,r}\bigr)_{q,q}
 =\frac{q(3n-2q-1)}{N_n},
 \qquad r\leqslant q\leqslant n-r,
\end{equation}
and its upper off-diagonal entries are
\begin{equation}\label{eq:general-off-diagonal}
 \bigl(H_{n,r}\bigr)_{q,q+1}
 =-\frac{\sqrt3\,q
 \sqrt{(q-r+1)(n-r-q)}}{N_n},
 \qquad r\leqslant q<n-r.
\end{equation}
The lower off-diagonal entries are obtained by symmetry.
\end{proposition}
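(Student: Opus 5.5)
The plan is a direct computation: write $H_n=-A_n$ out from \eqref{eq:A-definition} and evaluate each term on the weight basis $\mathbf e_q^{(r)}$ using \eqref{eq:rhoZ-action} and \eqref{eq:rhoX-action}. Both $\rho(Z)$ and $\rho(X)$ preserve $\cW_{n,r}$, since they lie in the collective $\mathrm{SU}(2)$ action. In the weight basis, $\rho(Z)$ is diagonal and $\rho(X)$ only changes the index $q$ by $\pm1$. So every term of $A_n$ is a product of operators that are either diagonal or shift $q$ by one, and tridiagonality is immediate. Symmetry follows because $A_n$ is Hermitian: $\rho(X),\rho(Z)$ are real symmetric in this basis, and the anticommutator $\rho(X)\rho(Z)+\rho(Z)\rho(X)$ is symmetric.

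\textbf{Diagonal entries.} Only the terms $(n-1)\rho(Z)+\rho(Z)^2$ and the scalar multiple of $I$ contribute. The other two terms, $\rho(X)$ and the anticommutator, each contain exactly one factor of $\rho(X)$ and are therefore purely off-diagonal. On $\mathbf e_q^{(r)}$ this gives
\[
 (H_{n,r})_{q,q}=1+\frac1{2(n-1)}-\frac{(n-2q)\bigl((n-1)+(n-2q)\bigr)}{2N_n}.
\]
I will put this over the common denominator $2N_n$, writing $1+\frac1{2(n-1)}=\frac{2N_n+n}{2N_n}$, and expand:
\[
 (2n^2-n)-(n-2q)(2n-2q-1)=6nq-4q^2-2q=2q(3n-2q-1).
\]
Dividing by $2N_n$ gives \eqref{eq:general-diagonal}. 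Note that the resulting formula does not depend on $r$.

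\textbf{Off-diagonal entries.} Set $x_q=\langle \mathbf e_q^{(r)},\rho(X)\mathbf e_{q+1}^{(r)}\rangle$. Reading off the lowering part of \eqref{eq:rhoX-action} with index $q+1$ gives $x_q=\sqrt{(q-r+1)(n-r-q)}$. The $(q,q+1)$ entry of the anticommutator is
\[
 x_q\bigl(z_q+z_{q+1}\bigr)=x_q\bigl((n-2q)+(n-2q-2)\bigr)=2(n-2q-1)x_q,
\]
where $z_q=n-2q$ is the $\rho(Z)$-eigenvalue. Therefore
\[
 (H_{n,r})_{q,q+1}=-\frac{2}{N_n}\cdot\frac{\sqrt3}{4}\Bigl((n-1)-(n-2q-1)\Bigr)x_q=-\frac{\sqrt3\,q\,x_q}{N_n},
\]
which is \eqref{eq:general-off-diagonal}.

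There is no real obstacle; the computation is routine. The only point needing care is the index bookkeeping: one must match the raising coefficient $\sqrt{(q-r+1)(n-r-q)}$ with the lowering coefficient at $q+1$, and keep track of the boundary conventions at $q=r$ and $q=n-r$. There the out-of-range vectors vanish, which is consistent with $x_{n-r}=0$ and with the $(q-r)$ factor vanishing at $q=r$. The key simplification is that the anticommutator averages the two adjacent $Z$-eigenvalues to $n-2q-1$; combined with the $(n-1)\rho(X)$ term, this collapses the coefficient to $2q$.
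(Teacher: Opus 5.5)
Your proposal is correct and is essentially the paper's own proof. The diagonal comes from the $\rho(Z)$-only terms together with the scalar shift, and the off-diagonal from the first line of \eqref{eq:A-definition}, where the anticommutator averages the adjacent eigenvalues $n-2q$ and $n-2q-2$ so that the coefficient collapses to $2q$. Your remarks on tridiagonality (each term contains at most one factor of $\rho(X)$), symmetry, and the boundary indices make explicit what the paper leaves implicit.
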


\begin{proof}
The part of \eqref{eq:A-definition} involving only $\rho(Z)$ gives
\begin{align*}
 \bigl(A_n\bigr)_{q,q}
 &=\frac{(n-1)(n-2q)+(n-2q)^2}{2N_n}
 -1-\frac1{2(n-1)}=-\frac{q(3n-2q-1)}{N_n},
\end{align*}
which proves \eqref{eq:general-diagonal} after changing sign.

The $(q,q+1)$ entry of $\rho(X)$ is
$\sqrt{(q-r+1)(n-r-q)}$, while the corresponding diagonal eigenvalues of
$\rho(Z)$ are $n-2q$ and $n-2q-2$.  Hence the first line of
\eqref{eq:A-definition} contributes
\begin{align*}
 &\frac2{N_n}\frac{\sqrt3}{4}
 \bigl((n-1)-(n-2q-1)\bigr)
 \sqrt{(q-r+1)(n-r-q)}
 =\frac{\sqrt3\,q\sqrt{(q-r+1)(n-r-q)}}{N_n}.
\end{align*}
Changing sign gives \eqref{eq:general-off-diagonal}.
\end{proof}

The two critical blocks will be used repeatedly.  In the highest-spin
sector, the basis is indexed by $0\leqslant q\leqslant n$.  The vector
$\mathbf e_0^{(0)}$ is an isolated zero eigenvector because the first off-diagonal
entry in \eqref{eq:general-off-diagonal} vanishes.  Removing it leaves an
$n\times n$ Jacobi matrix, which we denote by
\begin{equation*}\label{eq:H0-circle}
 H_{n,0}^{\circ}
 =H_{n,0}\big|_{\Span\{\mathbf e_1^{(0)},\ldots,\mathbf e_n^{(0)}\}}.
\end{equation*}
Its entries are
\begin{align}
 \bigl(H_{n,0}^{\circ}\bigr)_{q,q}
 &=\frac{q(3n-2q-1)}{N_n},
 &&1\leqslant q\leqslant n,
 \label{eq:H0-diagonal}\\
 \bigl(H_{n,0}^{\circ}\bigr)_{q,q+1}
 &=-\frac{\sqrt3\,q\sqrt{(q+1)(n-q)}}{N_n},
 &&1\leqslant q<n.
 \label{eq:H0-off-diagonal}
\end{align}
In the next-to-highest-spin sector, the basis is indexed by
$1\leqslant q\leqslant n-1$, and
\begin{align}
 \bigl(H_{n,1}\bigr)_{q,q}
 &=\frac{q(3n-2q-1)}{N_n},
 &&1\leqslant q\leqslant n-1,
 \label{eq:H1-diagonal}\\
 \bigl(H_{n,1}\bigr)_{q,q+1}
 &=-\frac{\sqrt3\,q\sqrt{q(n-q-1)}}{N_n},
 &&1\leqslant q<n-1.
 \label{eq:H1-off-diagonal}
\end{align}
Thus the diagonal sequences of the two matrices agree on their common
range.  Their off-diagonal entries are close, but not close enough for a
direct entrywise comparison after the highest-spin zero mode is removed.
For a square matrix $M$, write $M^{[k]}$ for its leading $k\times k$
principal submatrix.  Then
\begin{equation}\label{eq:critical-off-diagonal-ratio}
 \bigl(H_{n,1}\bigr)_{q,q+1}
 =\left(\bigl(H_{n,0}^{\circ}\bigr)^{[n-1]}\right)_{q,q+1}
 \sqrt{\frac{q(n-q-1)}{(q+1)(n-q)}}
 \qquad (1\leqslant q<n-1).
\end{equation}
The factor in \eqref{eq:critical-off-diagonal-ratio} lies strictly between
zero and one.  This suggests that the next-to-highest-spin block should
have the larger ground-state energy, but it does not yield an operator
inequality: the matrix
$H_{n,1}-\bigl(H_{n,0}^{\circ}\bigr)^{[n-1]}$ has zero diagonal and
non-zero positive off-diagonal entries, and is therefore indefinite.  
Since $H_{n,0}^{\circ}$ has eigenvalues
\begin{equation*}
 0<\lambda_{\min}^{+}(H_{n,0}^{\circ})<\cdots,
\end{equation*}
the interlacing property only guarantees that the smallest eigenvalue of
$\bigl(H_{n,0}^{\circ}\bigr)^{[n-1]}$ lies between $0$ and
$\lambda_{\min}^{+}(H_{n,0}^{\circ})$. Therefore, in Section~\ref{sec:critical} we introduce a common-threshold argument that is specifically constructed to bypass this zero-mode obstruction.

\subsection{A local positive-semidefinite decomposition}

We next establish positivity and separate all spin sectors with
$r\geqslant2$.  Let $\mathsf a_{ij}$ denote the copy of the first matrix in
\eqref{eq:local-a-b} acting on sites $i$ and $j$.  Its complete-graph
average is $A_n$.  Put
\begin{equation*}\label{eq:local-h}
 \mathsf h_{ij}=-\mathsf a_{ij}.
\end{equation*}
Suppressing the sites, the matrix of $\mathsf h=\mathsf h_{ij}$ in the
effective computational basis
\begin{equation*}\label{eq:effective-basis}
 \ket{00},\quad\ket{01},\quad\ket{10},\quad\ket{11}
\end{equation*}
is
\begin{equation}\label{eq:h-matrix}
 \mathsf h=
 \begin{pmatrix}
 0&0&0&0\\
 0&\frac32&0&-\frac{\sqrt3}{2}\\
 0&0&\frac32&-\frac{\sqrt3}{2}\\
 0&-\frac{\sqrt3}{2}&-\frac{\sqrt3}{2}&1
 \end{pmatrix}.
\end{equation}
Define the unit vectors
\begin{align}
 &{\psi}^{-}=\frac1{\sqrt2}(\ket{01}-\ket{10}),
 \label{eq:psi-minus}\\
 &{\eta}=\sqrt{\frac3{10}}(\ket{01}+\ket{10})
 -\sqrt{\frac25}\ket{11}.
 \label{eq:eta-vector}
\end{align}

\begin{lemma}\label{lem:local-PSD}
The local operator in \eqref{eq:h-matrix} has the decomposition
\begin{equation}\label{eq:h-decomposition}
 \mathsf h=\frac32\proj{{\psi}^{-}}+\frac52\proj{{\eta}}.
\end{equation}
Consequently,
\begin{equation}\label{eq:h-antisymmetric-bound}
 \mathsf h\succeq\frac32P^-,
 \qquad
 P^-=\frac{I-\mathrm{SWAP}}2.
\end{equation}
\end{lemma}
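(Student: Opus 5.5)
The decomposition \eqref{eq:h-decomposition} is a finite identity between $4\times4$ matrices, so I will verify it entrywise. Both $\psi^-$ and $\eta$ have zero $\ket{00}$-component, so the right-hand side vanishes on the first row and column, matching \eqref{eq:h-matrix}. I will therefore work on the three-dimensional block spanned by $\ket{01},\ket{10},\ket{11}$.

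First I compute
\[
 \tfrac32\proj{\psi^-}=\tfrac34\begin{pmatrix}1&-1&0\\-1&1&0\\0&0&0\end{pmatrix}.
\]
Next, writing $\eta=(\sqrt{3/10},\sqrt{3/10},-\sqrt{2/5})$ in this block, I compute
\[
 \tfrac52\proj{\eta}=\tfrac52\begin{pmatrix}\frac3{10}&\frac3{10}&-\frac{\sqrt6}{10}\\ \frac3{10}&\frac3{10}&-\frac{\sqrt6}{10}\\ -\frac{\sqrt6}{10}&-\frac{\sqrt6}{10}&\frac25\end{pmatrix}
 =\begin{pmatrix}\frac34&\frac34&-\frac{\sqrt6}{4}\\ \frac34&\frac34&-\frac{\sqrt6}{4}\\ -\frac{\sqrt6}{4}&-\frac{\sqrt6}{4}&1\end{pmatrix}.
\]
Adding the two matrices gives diagonal entries $\frac32,\frac32,1$, a $(01,10)$ entry $-\frac34+\frac34=0$, and remaining off-diagonal entries $-\frac{\sqrt6}{4}$. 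These diagonal and $(01,10)$ entries agree with \eqref{eq:h-matrix}.

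The main obstacle is that the entries coupling $\ket{01}$ and $\ket{10}$ to $\ket{11}$ do not match: the computation gives $-\sqrt6/4$, whereas \eqref{eq:h-matrix} has $-\sqrt3/2$. I will therefore recheck this against the spectrum. The block of $\mathsf h$ on the symmetric span of $(\ket{01}+\ket{10})/\sqrt2$ and $\ket{11}$ is
\[
 \begin{pmatrix}\frac32&-\frac{\sqrt6}{2}\\ -\frac{\sqrt6}{2}&1\end{pmatrix}.
\]
Its determinant is $\frac32-\frac32=0$ and its trace is $\frac52$. Hence this block is rank one with eigenvalue $\frac52$, and its kernel is spanned by $(\sqrt6/2,\,3/2)$, which normalises to $(\sqrt{2/5},\sqrt{3/5})$. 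The null vector is then $\sqrt{1/5}(\ket{01}+\ket{10})+\sqrt{3/5}\ket{11}$. The orthogonal unit vector in the symmetric span is $\sqrt{3/10}(\ket{01}+\ket{10})-\sqrt{2/5}\ket{11}$, which is exactly $\eta$. The spectral theorem then gives the restriction $\frac52\proj{\eta}$ on the symmetric span. On the antisymmetric line, $\mathsf h\psi^-=\frac32\psi^-$, since the $\ket{11}$-couplings cancel. Because $\mathsf h$ commutes with SWAP, these two pieces assemble into \eqref{eq:h-decomposition}.

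In the written proof I will present this spectral argument rather than the entrywise sum, to avoid any square-root slip. The $-\sqrt6/4$ discrepancy comes from treating the entries of $\proj{\eta}$ loosely, so the spectral route is the safe one.

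For \eqref{eq:h-antisymmetric-bound}, the term $\frac52\proj{\eta}$ is positive semidefinite. On $\cV\otimes\cV$ the antisymmetric subspace is one-dimensional and spanned by $\psi^-$, so $P^-=\proj{\psi^-}$. It follows that $\mathsf h-\frac32P^-=\frac52\proj{\eta}\succeq0$.
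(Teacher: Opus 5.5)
Your final spectral argument is correct, but the ``main obstacle'' you report is not real. It is an arithmetic slip in the entrywise check. The cross entry of $\proj{\eta}$ is $\sqrt{3/10}\cdot\bigl(-\sqrt{2/5}\bigr)=-\sqrt{6/50}=-\sqrt3/5$, not $-\sqrt6/10$. Hence $\frac52\proj{\eta}$ has $(01,11)$ and $(10,11)$ entries $-\sqrt3/2$, and adding $\frac32\proj{\psi^-}$ reproduces \eqref{eq:h-matrix} exactly. That entrywise expansion, together with the orthonormality of $\psi^-$ and $\eta$, is the paper's entire proof.

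Your own block computation already confirms this. The off-diagonal entry $-\sqrt6/2$ in the orthonormal coordinates $\bigl((\ket{01}+\ket{10})/\sqrt2,\ket{11}\bigr)$ becomes $-\sqrt6/2\cdot 1/\sqrt2=-\sqrt3/2$ in the computational basis. Do not leave the ``discrepancy'' in the write-up. A genuine mismatch would contradict the identity you are proving, so a correct entrywise check must agree with your spectral argument.

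As a route, your argument differs from the paper's and explains more. You use that $\mathsf h$ commutes with SWAP and annihilates $\ket{00}$ to split $\C^2\otimes\C^2$ into orthogonal $\mathsf h$-invariant pieces:
\begin{itemize}
\item the antisymmetric line, where $\mathsf h\psi^-=\frac32\psi^-$;
\item the line spanned by $\ket{00}$;
\item a $2\times2$ symmetric block whose vanishing determinant and trace $5/2$ force it to equal $\frac52\proj{\eta}$.
\end{itemize}
This derives $\eta$ rather than merely checking it. It also shows that \eqref{eq:h-decomposition} is the spectral decomposition of $\mathsf h$, with eigenvalues $0,0,\frac32,\frac52$. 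Orthonormality of $\psi^-$ and $\eta$ comes for free, since they lie in different SWAP eigenspaces.

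The paper's route is shorter but is a pure verification once $\eta$ has been guessed. Both arguments conclude the same way: $P^-=\proj{\psi^-}$ on $\C^2\otimes\C^2$, so $\mathsf h-\frac32P^-=\frac52\proj{\eta}\succeq0$.
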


\begin{proof}
The vectors ${\psi}^{-}$ and ${\eta}$ are orthonormal.  Expanding the two
rank-one operators on the right-hand side of \eqref{eq:h-decomposition}
gives the matrix in \eqref{eq:h-matrix}.  Since
$P^-=\proj{{\psi}^{-}}$ on $\C^2\otimes\C^2$, equation
\eqref{eq:h-antisymmetric-bound} follows.
\end{proof}

Averaging \eqref{eq:h-antisymmetric-bound} over the complete graph gives
\begin{equation}\label{eq:global-h-lower}
 H_n=\frac2{N_n}\sum_{i<j}\mathsf h_{ij}
 \succeq
 \frac32\frac2{N_n}\sum_{i<j}P^-_{ij}.
\end{equation}
In particular, $H_n$ is positive semidefinite, and hence $A_n$ is
negative semidefinite.

\begin{proposition}\label{prop:lower-spin-bound}
For every $0\leqslant r\leqslant\lfloor n/2\rfloor$,
\begin{equation}\label{eq:sector-lower-bound}
 H_{n,r}\succeq
 \frac{3r(n-r+1)}{2N_n}I.
\end{equation}
In particular, if $r\geqslant2$, then
\begin{equation}\label{eq:r-ge-two-bound}
 \lambda_{\min}(H_{n,r})\geqslant\frac3n.
\end{equation}
\end{proposition}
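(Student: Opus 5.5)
The plan is to average the local bound \eqref{eq:h-antisymmetric-bound} over the complete graph, as already recorded in \eqref{eq:global-h-lower}, and then evaluate the right-hand side exactly on each Schur--Weyl summand.

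First, rewrite the antisymmetric projections through transpositions. Since $P^-_{ij}=\tfrac12(I-(ij))$, we have
\begin{equation*}
 \frac2{N_n}\sum_{i<j}P^-_{ij}
 =\frac1{N_n}\sum_{i<j}\bigl(I-(ij)\bigr)
 =\frac12 I-\frac1{N_n}\sum_{i<j}(ij)
 =-\frac12 B_n ,
\end{equation*}
where the last equality uses \eqref{eq:B-definition} and the identity $\binom n2=N_n/2$. Therefore \eqref{eq:global-h-lower} reads $H_n\succeq-\tfrac34B_n$.

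Second, both $H_n$ and $B_n$ preserve each summand $\cW_{n,r}\otimes S^{(n-r,r)}$, so the operator inequality restricts to every summand. By Lemma~\ref{lem:B-scalar}, $B_n$ acts there as the scalar $-2r(n-r+1)/N_n$. Hence on this summand
\begin{equation*}
 H_n\succeq \frac{3r(n-r+1)}{2N_n}\,I .
\end{equation*}
Because $H_n$ acts on the summand as $H_{n,r}\otimes I$, this yields \eqref{eq:sector-lower-bound}.

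Third, for $r\geqslant2$ we need $3r(n-r+1)/(2n(n-1))\geqslant3/n$, which is equivalent to $r(n-r+1)\geqslant2(n-1)$. The left side is concave in $r$, so on $2\leqslant r\leqslant\lfloor n/2\rfloor$ it is minimised at an endpoint.
\begin{itemize}
 \item At $r=2$ it equals $2(n-1)$, so the inequality holds with equality.
 \item At $r=\lfloor n/2\rfloor$ we have $r\geqslant2$ and $n-r+1\geqslant r+1\geqslant n/2+1/2$ roughly. The product is then at least $2(n-1)$ for all $n\geqslant4$, and this is easy to verify directly, as $r(n-r+1)$ increases on $r\leqslant(n+1)/2$.
\end{itemize}
Monotonicity on this range actually makes $r=2$ the minimum outright.

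No step here is a real obstacle. The only point needing care is the bookkeeping of normalisations, namely the factor $2/N_n$ against $\binom n2$ and the sign convention for $B_n$, so that the constant comes out as exactly $3/(2N_n)$ times $2r(n-r+1)$.
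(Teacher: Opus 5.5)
Your proof is correct and follows the paper's argument. You rewrite $\frac{2}{N_n}\sum_{i<j}P^-_{ij}$ via $P^-_{ij}=\tfrac12(I-(ij))$, evaluate it on each summand with Lemma~\ref{lem:B-scalar}, and combine this with \eqref{eq:global-h-lower}. For $r\geqslant2$ the paper uses the factorisation $r(n-r+1)-2(n-1)=(r-2)(n-r-1)\geqslant0$ in place of your monotonicity argument, which is equivalent; your second bullet (with its ``roughly'') can simply be dropped, since $r(n-r+1)$ is increasing for $r\leqslant n/2$.
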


\begin{proof}
Since $P^-_{ij}=(I-(ij))/2$, Lemma~\ref{lem:B-scalar} gives
\begin{equation*}\label{eq:Pminus-scalar}
 \left.\frac2{N_n}\sum_{i<j}P^-_{ij}\right|_{\cW_{n,r}}
 =\frac{r(n-r+1)}{N_n}I.
\end{equation*}
Combining this with \eqref{eq:global-h-lower} proves
\eqref{eq:sector-lower-bound}.

For $2\leqslant r\leqslant n/2$,
\begin{equation*}\label{eq:r-product-comparison}
 r(n-r+1)-2(n-1)=(r-2)(n-r-1)\geqslant0.
\end{equation*}
Thus \eqref{eq:sector-lower-bound} is at least
$3(n-1)/N_n=3/n$, which proves \eqref{eq:r-ge-two-bound}.
\end{proof}

\subsection{The highest-spin zero mode}

The highest-spin block has two zero modes.  One is the isolated vector
$\mathbf e_0^{(0)}$.  The other lies in the Jacobi chain
$H_{n,0}^{\circ}$.

\begin{lemma}\label{lem:highest-zero-mode}
The vector
\begin{equation}\label{eq:phi-vector}
 {\phi}=\sum_{q=1}^{n}\phi_q\mathbf e_q^{(0)},
 \qquad
 \phi_q=\sqrt{\frac{3^q\binom nq}{4^n-1}},
\end{equation}
is a unit vector in the kernel of $H_{n,0}^{\circ}$.  This zero
eigenvalue is simple.
\end{lemma}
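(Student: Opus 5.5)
Three things have to be checked: $\phi$ has unit norm, $H_{n,0}^{\circ}\phi=0$, and the kernel is one-dimensional.

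\emph{Normalisation.} This follows from the binomial theorem:
\[
\sum_{q=1}^n 3^q\binom nq=(1+3)^n-1=4^n-1 .
\]

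\emph{Kernel.} I would verify $H_{n,0}^{\circ}\phi=0$ row by row using \eqref{eq:H0-diagonal} and \eqref{eq:H0-off-diagonal}. Multiplying by $N_n$, row $q$ of $N_nH_{n,0}^{\circ}\phi$ reads
\[
-\sqrt3\,(q-1)\sqrt{q(n-q+1)}\,\phi_{q-1}+q(3n-2q-1)\phi_q-\sqrt3\,q\sqrt{(q+1)(n-q)}\,\phi_{q+1}.
\]
The first term is absent for $q=1$, since its coefficient $q-1$ vanishes, and the last is absent for $q=n$. The ratios of consecutive components are
\[
\frac{\phi_{q-1}}{\phi_q}=\sqrt{\frac{q}{3(n-q+1)}},\qquad \frac{\phi_{q+1}}{\phi_q}=\sqrt{\frac{3(n-q)}{q+1}}.
\]
Dividing the row by $\phi_q$ therefore gives
\[
-(q-1)q+q(3n-2q-1)-3q(n-q)=q\bigl[-(q-1)+3n-2q-1-3n+3q\bigr]=0 .
\]
This identity holds uniformly, and the boundary rows $q=1$ and $q=n$ are covered because the omitted terms carry vanishing coefficients.

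A cleaner conceptual route would also work. Lemma~\ref{lem:local-PSD} shows that the two-site kernel of $\mathsf h$ is spanned by $\ket{00}$ and $\sqrt3\,\psi^{+}$-type combinations orthogonal to $\psi^-$ and $\eta$. In particular, the product vector $v^{\otimes 2}$ with $v=\ket0+\sqrt3\ket1$ satisfies $\mathsf h\,v^{\otimes2}=0$, since $\langle\psi^-|v^{\otimes2}\rangle=0$ and $\langle\eta|v^{\otimes2}\rangle=\sqrt{3/10}\cdot2\sqrt3-\sqrt{2/5}\cdot3=0$. Hence $H_n v^{\otimes n}=0$. The vector $v^{\otimes n}$ is symmetric, and its expansion in the weight basis of $\cW_{n,0}$ has coefficients $\sqrt{3^q\binom nq}$. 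Removing the $q=0$ component $\mathbf e_0^{(0)}$, itself a separate kernel vector, leaves exactly a multiple of $\phi$. I would present the direct check and mention this interpretation.

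\emph{Simplicity.} This is the only point that needs care, and it is short. $H_{n,0}^{\circ}$ is an $n\times n$ Jacobi matrix whose off-diagonal entries $-\sqrt3\,q\sqrt{(q+1)(n-q)}/N_n$ are nonzero for $1\leqslant q<n$, so it is unreduced. For an unreduced tridiagonal matrix, any eigenvector is determined by its first component through the three-term recurrence. A vanishing first component would force the whole vector to vanish, so every eigenvalue is simple. In particular, the eigenvalue $0$ has multiplicity one.
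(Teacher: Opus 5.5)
Your proof is correct and matches the paper's argument step for step: normalisation by the binomial theorem, a row-by-row check of $H_{n,0}^{\circ}\phi=0$ using the consecutive-coordinate ratios (with the boundary rows covered by vanishing coefficients), and simplicity from the off-diagonal entries being non-zero, where you prove the three-term-recurrence fact directly instead of citing Teschl. Your extra observation that $\mathsf h\,v^{\otimes 2}=0$ for $v=\ket{0}+\sqrt3\ket{1}$, so that $\phi$ is the normalised $q\geqslant1$ part of the frustration-free product state $v^{\otimes n}$, is also correct and gives a conceptual explanation of the zero mode that the paper does not mention.
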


\begin{proof}
The normalisation follows from
\begin{equation*}\label{eq:phi-normalisation}
 \sum_{q=1}^{n}3^q\binom nq=4^n-1.
\end{equation*}
The ratios of consecutive coordinates are
\begin{equation*}\label{eq:phi-ratios}
 \frac{\phi_{q+1}}{\phi_q}
 =\sqrt{\frac{3(n-q)}{q+1}}
 \quad (1\leqslant q<n),
 \qquad
 \frac{\phi_{q-1}}{\phi_q}
 =\sqrt{\frac{q}{3(n-q+1)}}
 \quad (1<q\leqslant n).
\end{equation*}
Using \eqref{eq:H0-off-diagonal}, the contributions from the adjacent
coordinates to the $q$th row are
\begin{align}
 \frac{\sqrt3\,q\sqrt{(q+1)(n-q)}}{N_n}
 \frac{\phi_{q+1}}{\phi_q}
 &=\frac{3q(n-q)}{N_n}
 &&(1\leqslant q<n),
 \label{eq:phi-upper-contribution}\\
 \frac{\sqrt3\,(q-1)\sqrt{q(n-q+1)}}{N_n}
 \frac{\phi_{q-1}}{\phi_q}
 &=\frac{q(q-1)}{N_n}
 &&(1<q\leqslant n).
 \label{eq:phi-lower-contribution}
\end{align}
For $2\leqslant q\leqslant n-1$, their sum is
$q(3n-2q-1)/N_n$, exactly the diagonal entry in
\eqref{eq:H0-diagonal}.  At $q=1$ and $q=n$, the same identity holds after
omitting the absent neighbouring term, whose corresponding off-diagonal
coefficient is zero.  Hence $H_{n,0}^{\circ}{\phi}=0$.

Every off-diagonal entry of $H_{n,0}^{\circ}$ is non-zero.  A real
symmetric irreducible Jacobi matrix has simple spectrum
\cite[Chapter~1]{Teschl}; therefore its zero eigenvalue is simple.
\end{proof}

The positive spectrum of $H_{n,0}$ is the positive spectrum of
$H_{n,0}^{\circ}$.  Moreover, the local bound
\eqref{eq:sector-lower-bound} shows that
$\lambda_{\min}(H_{n,r})>0$ for every $r\geqslant1$.  Thus
Theorem~\ref{thm:intro-localisation} is precisely the assertion that
\begin{equation*}
 \lambda_{\min}^{+}(H_{n,0}^{\circ})
 <\lambda_{\min}(H_{n,r})
 \qquad (r\geqslant1),
\end{equation*}
and hence that the first positive eigenvalue of $H_n$ occurs in the
highest-spin sector.

\section{The critical Jacobi comparison}\label{sec:critical}

Proposition~\ref{prop:lower-spin-bound} has already separated every
sector with $r\geqslant2$, provided that we can place
$\lambda_{\min}^{+}(H_{n,0}^{\circ})$ below $3/n$.  
The only close competitor is $H_{n,1}$.  For $n\geqslant9$, we shall prove
\begin{equation}\label{eq:critical-goal}
 \lambda_{\min}^{+}(H_{n,0}^{\circ})
 <\frac{3n-5}{N_n}
 <\lambda_{\min}(H_{n,1}).
\end{equation}

The middle quantity in \eqref{eq:critical-goal} is not an arbitrary
interpolation between two numerically observed eigenvalues.  It is adapted
simultaneously to the first two coefficients of the highest-spin chain,
the first pivot of the next-to-highest-spin chain, and the uniform bound
for all remaining sectors.  More precisely,
\begin{equation*}\label{eq:tau-three-identities}
 \bigl(H_{n,0}^{\circ}\bigr)_{2,2}=\frac{2(3n-5)}{N_n},
 \qquad
 \bigl(H_{n,1}\bigr)_{1,1}-\frac{3n-5}{N_n}=\frac{2}{N_n},
 \qquad
 \frac3n-\frac{3n-5}{N_n}=\frac{2}{N_n}.
\end{equation*}
The first identity makes the two-coordinate variational determinant
particularly simple.  After multiplication by $N_n$, the second makes the
initial Sturm pivot equal to $2$, whilst the third preserves a strict
margin below the coarse lower bound for every sector with $r\geqslant2$.
Thus the explicit rational threshold $(3n-5)/N_n$ is compatible with all
three parts of the proof.  Rather than estimating either critical
eigenvalue directly, we shall separate them across this common threshold.

We recall two standard facts.  Let $J$
be an $m\times m$ real symmetric tridiagonal matrix with non-zero
off-diagonal entries, and let
\begin{equation*}\label{eq:sturm-polynomials-general}
 p_0(t)=1,
 \qquad
 p_k(t)=\det\bigl(J^{[k]}-tI_k\bigr).
\end{equation*}
If none of the values $p_k(t)$ vanishes, then the number of eigenvalues of
$J$ below $t$ is the number of sign changes in
$p_0(t),p_1(t),\ldots,p_m(t)$.  Moreover, a real symmetric matrix is
positive definite if and only if all its leading principal minors are
positive.  These are respectively the finite-dimensional Sturm theorem
and Sylvester's criterion; see, for example,
\cite{HornJohnson,Teschl}.

\subsection{A variational upper bound in the highest-spin sector}

\begin{lemma}\label{lem:highest-variational}
For every $n\geqslant6$,
\begin{equation*}\label{eq:gamma0-less-tau}
 \lambda_{\min}^{+}(H_{n,0}^{\circ})<\frac{3n-5}{N_n}.
\end{equation*}
\end{lemma}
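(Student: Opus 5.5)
The plan is a two-dimensional Rayleigh--Ritz computation on the orthogonal complement of the zero mode $\phi$ of Lemma~\ref{lem:highest-zero-mode}. Since $H_{n,0}^{\circ}\succeq0$ and its zero eigenvalue is simple with eigenvector $\phi$, we have
\begin{equation*}
 \lambda_{\min}^{+}(H_{n,0}^{\circ})
 =\min\Bigl\{\frac{\ip{v}{H_{n,0}^{\circ}v}}{\ip{v}{v}}:\ 0\neq v\perp\phi\Bigr\}.
\end{equation*}
As trial vectors I would take the projections $v=Pw$ of $w=x\mathbf e_1^{(0)}+y\mathbf e_2^{(0)}$, where $P=I-\proj{\phi}$.

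The key point is that $H_{n,0}^{\circ}\phi=0$ gives $\ip{Pw}{H_{n,0}^{\circ}Pw}=\ip{w}{H_{n,0}^{\circ}w}$. Consequently the numerator is the quadratic form of the leading $2\times2$ block
\begin{equation*}
 K=\frac1{N_n}\begin{pmatrix}3(n-1)&-\sqrt{6(n-1)}\\ -\sqrt{6(n-1)}&2(3n-5)\end{pmatrix}.
\end{equation*}
Its entries come from \eqref{eq:H0-diagonal} and \eqref{eq:H0-off-diagonal}. The denominator is the quadratic form of
\begin{equation*}
 G=I_2-\begin{pmatrix}\phi_1\\ \phi_2\end{pmatrix}\begin{pmatrix}\phi_1&\phi_2\end{pmatrix},
\end{equation*}
which is positive definite because $\phi_1^2+\phi_2^2<1$. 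Hence $\lambda_{\min}^{+}(H_{n,0}^{\circ})$ is at most the smaller root of $\det(K-tG)=0$. Both roots of this pencil are positive, since $K\succ0$ with $\det K\propto 6(n-1)(3n-5)-6(n-1)>0$. So it suffices to show that $\det(K-\tau G)<0$ at $\tau=(3n-5)/N_n$: then $\tau$ lies strictly between the two roots.

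Multiplying through by $N_n$, the unperturbed part is simple, which is where the choice of $\tau$ pays off. We have $N_n(K-\tau I)=\begin{pmatrix}2&-\sqrt{6(n-1)}\\ -\sqrt{6(n-1)}&3n-5\end{pmatrix}$, whose determinant is $2(3n-5)-6(n-1)=-4$. The correction from the rank-one part of $G$ adds $(3n-5)\begin{pmatrix}\phi_1^2&\phi_1\phi_2\\ \phi_1\phi_2&\phi_2^2\end{pmatrix}$ to this matrix. Here $\phi_1^2=3n/(4^n-1)$ and $\phi_2^2=9n(n-1)/(2(4^n-1))$. Expanding the determinant, the added terms are $(3n-5)\bigl[2\phi_2^2+(3n-5)\phi_1^2+2\sqrt{6(n-1)}\,\phi_1\phi_2\bigr]$, since the pure rank-one term cancels.

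The main obstacle is to bound these exponentially small terms by something strictly less than $4$ for all $n\geqslant6$. Each is of size $O(n^3/4^n)$, so I would bound the bracket by a polynomial of degree about $3$ divided by $4^n-1$. I would then verify numerically at $n=6$ that the total is far below $4$: roughly $13\cdot(0.07)\approx1$. Finally, I would show that the ratio is monotonically decreasing for $n\geqslant6$, using $(n+1)^3/n^3<4$. If the margin at $n=6$ turns out tighter than expected, I would instead evaluate $\det(K-\tau G)$ exactly as a rational number there. This gives $\det(K-\tau G)<0$ and hence the lemma.
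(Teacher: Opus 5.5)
Your proposal is correct and is essentially the paper's proof. It uses the same trial space $\Pi\,\Span\{\mathbf e_1^{(0)},\mathbf e_2^{(0)}\}$, the same pencil (your $K$ is the paper's $M$, and your $G$ is its $G$), and the same sign test at $\tau=(3n-5)/N_n$; your simplification gives $N_n^2\det(K-\tau G)=-4+6n(6n-7)(3n-5)/(4^n-1)$, which is exactly \eqref{eq:det-M-tauG}.

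One caution about your $n=6$ estimate. The cross term is exactly $18n(n-1)/(4^n-1)$, and the correction equals $13572/4095\approx3.31$ rather than about $1$. So a crude bound such as $108n^3/(4^n-1)$ would fail at $n=6$, and you genuinely need the exact evaluation you mention as a fallback (equivalently, the paper's check that $2\cdot4^n-54n^3+153n^2-105n-2$ equals $1404$ at $n=6$ and then increases).
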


\begin{proof}
Let
\begin{equation*}\label{eq:E-subspace}
 E=\Span\{\mathbf e_1^{(0)},\mathbf e_2^{(0)}\}
 \subseteq\Span\{\mathbf e_1^{(0)},\ldots,\mathbf e_n^{(0)}\},
\end{equation*}
and let $\Pi=I-\proj{{\phi}}$
be the orthogonal projection onto ${\phi}^{\perp}$, where ${\phi}$ is the kernel
vector from Lemma~\ref{lem:highest-zero-mode}.  Since ${\phi}\notin E$, the
restriction of $\Pi$ to $E$ is injective.  For $\mathbf v\in E$ and $\mathbf w=\Pi\mathbf v$, we
have
\begin{equation}\label{eq:projected-quadratic}
 \ip{\mathbf w}{H_{n,0}^{\circ}\mathbf w}
 =\ip{\mathbf v}{H_{n,0}^{\circ}\mathbf v},
\end{equation}
because $H_{n,0}^{\circ}{\phi}=0$.

In the ordered basis $(\mathbf e_1^{(0)},\mathbf e_2^{(0)})$, the numerator of the
Rayleigh quotient on $\Pi E$ is represented by
\begin{equation*}\label{eq:M-matrix}
 M=
 \begin{pmatrix}
 \dfrac3n&-\dfrac{\sqrt{6(n-1)}}{N_n}\\[1em]
 -\dfrac{\sqrt{6(n-1)}}{N_n}&
 \dfrac{2(3n-5)}{N_n}
 \end{pmatrix}.
\end{equation*}
The squared norm of $\Pi\mathbf v$ is represented by the Gram matrix
\begin{equation*}\label{eq:G-matrix}
 G=I-\frac1{4^n-1}
 \begin{pmatrix}
 3n&3n\sqrt{\dfrac{3(n-1)}2}\\[0.8em]
 3n\sqrt{\dfrac{3(n-1)}2}&\dfrac92n(n-1)
 \end{pmatrix}.
\end{equation*}
The matrix $G$ is positive definite, again because $\Pi|_E$ is
injective.

We shall show that the smaller generalised eigenvalue of the pair $(M,G)$
is below $(3n-5)/N_n$.  Direct simplification gives
\begin{equation}\label{eq:det-M-tauG}
 \det\!\left(M-\frac{3n-5}{N_n}G\right)
 =\frac{2\bigl(-2\cdot4^n+54n^3-153n^2+105n+2\bigr)}
 {n^2(4^n-1)(n-1)^2}.
\end{equation}
The expression
\begin{equation*}
 2\cdot4^n-54n^3+153n^2-105n-2
\end{equation*}
equals $1404$ at $n=6$.  Moreover,
\begin{align*}
 &\bigl(2\cdot4^{n+1}-54(n+1)^3+153(n+1)^2-105(n+1)-2\bigr)\notag\\
 &\quad-4\bigl(2\cdot4^n-54n^3+153n^2-105n-2\bigr)
 =27n(n-1)(6n-17)>0
\end{align*}
for $n\geqslant3$.  The expression is therefore positive for every
$n\geqslant6$, and the determinant in \eqref{eq:det-M-tauG} is negative.

Since $G\succ0$, the matrix
\begin{equation*}
 G^{-1/2}\left(M-\frac{3n-5}{N_n}G\right)G^{-1/2}
\end{equation*}
is real symmetric and has negative determinant.  It therefore has a
negative eigenvalue.  Thus there is a non-zero coordinate vector
${\xi}\in\mathbb R^2$ for which
\begin{equation*}\label{eq:generalised-Rayleigh-less}
 \frac{{\xi}^{T}M{\xi}}
 {{\xi}^{T}G{\xi}}
 <\frac{3n-5}{N_n}.
\end{equation*}
Let $\mathbf v\in E$ be the vector with coordinate vector
${\xi}$.  The corresponding vector
$\mathbf w=\Pi\mathbf v\in{\phi}^{\perp}$ has the same quotient
by \eqref{eq:projected-quadratic}.  The min--max principle now gives
\begin{equation*}
 \lambda_{\min}^{+}(H_{n,0}^{\circ})<\frac{3n-5}{N_n}. \qedhere
\end{equation*}
\end{proof}

\begin{remark}\label{rem:two-dimensional-certificate}
The proof of Lemma~\ref{lem:highest-variational} is an exact
two-dimensional certificate.  It improves the one-coordinate trial vector
used in the highest-spin estimate of~\cite{KongLiLiu}; the second coordinate
is precisely what detects the separation from the $r=1$ block at the scale
needed here.
\end{remark}

\subsection{A Sturm lower bound in the next-to-highest-spin sector}

We now prove the opposite inequality for $H_{n,1}$.  It is enough to show
that
\begin{equation}\label{eq:shifted-next-spin-positive}
 N_nH_{n,1}-(3n-5)I\succ0.
\end{equation}
By \eqref{eq:H1-diagonal}--\eqref{eq:H1-off-diagonal}, the matrix on the
left is an $(n-1)\times(n-1)$ Jacobi matrix with diagonal entries
\begin{equation*}\label{eq:Kn-diagonal}
 q(3n-2q-1)-(3n-5),
 \qquad 1\leqslant q\leqslant n-1,
\end{equation*}
and upper off-diagonal entries
\begin{equation*}\label{eq:Kn-off-diagonal}
 -\sqrt{3q^3(n-q-1)},
 \qquad 1\leqslant q<n-1.
\end{equation*}
Let
\begin{equation*}\label{eq:Dk-definition}
 D_k=D_k(n)
 =\det\!\left(\bigl(N_nH_{n,1}-(3n-5)I\bigr)^{[k]}\right),
 \qquad D_0=1.
\end{equation*}
The tridiagonal determinant recurrence is
\begin{equation}\label{eq:Dk-recurrence}
 D_1=2,
 \qquad
 D_k=\bigl(k(3n-2k-1)-(3n-5)\bigr)D_{k-1}
 -3(k-1)^3(n-k)D_{k-2}.
\end{equation}

\begin{lemma}\label{lem:next-spin-Sturm}
For every $n\geqslant9$,
\begin{equation*}\label{eq:gamma1-greater-tau}
 \lambda_{\min}(H_{n,1})>\frac{3n-5}{N_n}.
\end{equation*}
\end{lemma}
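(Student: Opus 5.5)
By Sylvester's criterion, \eqref{eq:shifted-next-spin-positive} is equivalent to $D_k>0$ for $1\leqslant k\leqslant n-1$. This in turn gives $\lambda_{\min}(H_{n,1})>(3n-5)/N_n$. I would not try to control the determinants $D_k$ themselves, since they grow factorially. Instead I would work with the pivots $r_k=D_k/D_{k-1}$. Whenever $r_{k-1}>0$, the recurrence \eqref{eq:Dk-recurrence} becomes
\[
 r_1=2,\qquad r_k=d_k-\frac{3(k-1)^3(n-k)}{r_{k-1}},\qquad d_k=k(3n-2k-1)-(3n-5).
\]
The goal is an explicit positive comparison sequence $g_k$ with $g_1\leqslant 2$ and
\[
 d_k-\frac{3(k-1)^3(n-k)}{g_{k-1}}\geqslant g_k\qquad(2\leqslant k\leqslant n-1).
\]
Since the right-hand side of the pivot recurrence is increasing in $r_{k-1}>0$, induction then gives $r_k\geqslant g_k>0$ for all $k$, and hence every $D_k>0$.

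To guess $g_k$, I would use the zero-mode computation of Lemma~\ref{lem:highest-zero-mode} as a template. For the unshifted highest-spin chain $N_nH_{n,0}^{\circ}$, the LDL pivots are exactly $3q(n-q)$. Indeed, $d_q^{(0)}-3(q-1)^2q(n-q+1)/\bigl(3(q-1)(n-q+1)\bigr)=q(3n-2q-1)-q(q-1)=3q(n-q)$, which is the identity \eqref{eq:phi-upper-contribution}--\eqref{eq:phi-lower-contribution} in pivot form. For the $r=1$ chain the off-diagonal squares are $3(k-1)^3(n-k)$ rather than $3(k-1)^2k(n-k+1)$, and the diagonal is shifted by $-(3n-5)$. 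So I would try an ansatz of the form
\[
 g_k=3k(n-k)-\alpha(n-k)-\beta k+\gamma,
\]
or a rational variant with $(k-1)$ replacing $k$ in places, and fit it to the exact initial values $r_1=2$ and $r_2=(3n-4)/2$. The required inequality then becomes, after clearing the positive denominator $g_{k-1}$, a polynomial inequality $P_n(k)\geqslant0$ in two variables. I would prove this by substituting $k=2+s$, $n=k+1+t$ with $s,t\geqslant0$ (and $n\geqslant 9$) and checking that all coefficients are non-negative, or by splitting into a bulk range and a few boundary indices.

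The main obstacle is the ends of the chain, above all $k$ near $n-1$. There the diagonal term $d_k\approx k(n-k)+\dots$ is smallest relative to the subtracted term $3(k-1)^3(n-k)/r_{k-1}$, so the margin the threshold leaves is only of order $1$ in a quantity of order $n^2$. A single uniform ansatz may fail at the first few or the last few indices. If it does, I would treat $k\in\{1,2,3\}$ and $k\in\{n-2,n-1\}$ separately, using exact rational expressions for the first pivots and a slightly sharper one-step bound for the final pivot. This is also why the threshold $n\geqslant9$ should appear: the polynomial inequality should hold from $n=9$ onward, with $5\leqslant n\leqslant 8$ left to the exact certificates of Appendix~\ref{app:small}.
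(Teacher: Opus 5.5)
Your framework is the one the paper uses: Sylvester's criterion, the pivots $r_k=D_k/D_{k-1}$, and a monotone comparison sequence $g_k$ seeded by exactly computed minors. The gap is that the proposal stops before the step that constitutes the proof. You never exhibit $g_k$ or verify the one-step inequality, and your forecast of where the ansatz breaks points the wrong way.

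The paper takes $g_k=3(k-1)(n-k-1)$. This lies in your family ($\alpha=\beta=\gamma=3$) and is exactly your $r=0$ template $3q(n-q)$ with $q\mapsto k-1$, $n\mapsto n-2$. Because $g_{k-1}=3(k-2)(n-k)$ carries the same factor $n-k$ as the coupling $3(k-1)^3(n-k)$, the subtracted term is $(k-1)^3/(k-2)$, independent of $n$, and
\[
 k(3n-2k-1)-(3n-5)-\frac{(k-1)^3}{k-2}
 =3(k-1)(n-k-1)+\frac{k-3}{k-2}.
\]
Hence $r_{k-1}>g_{k-1}$ implies $r_k>g_k$ for every $3\leqslant k\leqslant n-1$. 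The end of the chain therefore needs no special treatment, even though $g_{n-1}=0$ and the margin there is only of order one.

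The real obstruction is at the start of the chain. One has $g_2=3(n-3)>(3n-4)/2=r_2$. At $n=9$ the true pivots $r_3,r_4,r_5\approx25.5,\,34.1,\,35.5$ still lie below $g_3,g_4,g_5=30,36,36$. So with this $g$ the induction cannot be seeded before $k=6$, and isolating only $k\leqslant3$ is not enough. The paper handles this by expanding $D_1,\dots,D_5$ in $s=n-9$ (all coefficients positive) and proving $D_6-15(n-7)D_5>0$, which is exactly $r_6>g_6$.

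Your alternative, fitting $g$ to $r_1=2$ and $r_2$, cannot work inside your family. Requiring $g_1>0$ for all large $n$ forces $\alpha\leqslant3$. But $g_1\leqslant2$ together with the step at $k=2$ forces $g_2\leqslant r_2$, and that forces $\alpha\geqslant9/2$ for all large $n$.

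What is missing is an explicit sequence whose one-step inequality you actually verify, together with a seed long enough to cover $n=9$ (here up to $k=6$).
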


\begin{proof}
We prove that all leading principal minors of
$N_nH_{n,1}-(3n-5)I$ are positive.
Put
\begin{equation*}\label{eq:s-n-minus-nine}
 s=n-9\geqslant0.
\end{equation*}
Repeated use of \eqref{eq:Dk-recurrence} gives
\begin{align}
 D_1={}&2,\label{eq:D1}\\
 D_2={}&3s+23,\label{eq:D2}\\
 D_3={}&18s^2+204s+586,\label{eq:D3}\\
 D_4={}&162s^3+2493s^2+12396s+19985,\label{eq:D4}\\
 D_5={}&1944s^4+35856s^3+240354s^2+689604s+709082.\label{eq:D5}
\end{align}
Every coefficient displayed here is positive, so $D_1,\ldots,D_5>0$.
The next recurrence step gives the sharper relation
\begin{align}
 D_6-15(n-7)D_5
 ={}&1458s^4+30267s^3+238203s^2+627453s+207499>0.
 \label{eq:D6-base}
\end{align}
In particular, $D_6>0$ and
\begin{equation}\label{eq:D6-ratio}
 \frac{D_6}{D_5}>15(n-7)=3(6-1)(n-7).
\end{equation}

We claim inductively that
\begin{equation}\label{eq:Dk-ratio-claim}
 \frac{D_k}{D_{k-1}}>3(k-1)(n-k-1)
 \qquad (6\leqslant k\leqslant n-1).
\end{equation}
The case $k=6$ is \eqref{eq:D6-ratio}.  Suppose that
$7\leqslant k\leqslant n-1$ and that
\begin{equation}\label{eq:Dkminus1-ratio-hypothesis}
 \frac{D_{k-1}}{D_{k-2}}>3(k-2)(n-k).
\end{equation}
Dividing \eqref{eq:Dk-recurrence} by $D_{k-1}$ and using
\eqref{eq:Dkminus1-ratio-hypothesis}, we obtain
\begin{align}
 \frac{D_k}{D_{k-1}}
 &>k(3n-2k-1)-(3n-5)-\frac{(k-1)^3}{k-2}\notag\\
 &=3(k-1)(n-k-1)+\frac{k-3}{k-2}\notag\\
 &>3(k-1)(n-k-1).
 \label{eq:Dk-ratio-step}
\end{align}
This proves \eqref{eq:Dk-ratio-claim}.

Equations \eqref{eq:D1}--\eqref{eq:D6-base} and
\eqref{eq:Dk-ratio-claim} show that
\begin{equation*}\label{eq:all-D-positive}
 D_k>0
 \qquad (1\leqslant k\leqslant n-1).
\end{equation*}
By Sylvester's criterion, \eqref{eq:shifted-next-spin-positive} holds.
Equivalently,
\begin{equation*}
 \lambda_{\min}(H_{n,1})>\frac{3n-5}{N_n}.\qedhere
\end{equation*}
\end{proof}

\subsection{The remaining finite cases}

The threshold argument now gives the desired comparison for $n\geqslant9$.
Indeed, Lemmas~\ref{lem:highest-variational} and
\ref{lem:next-spin-Sturm} yield
\begin{equation}\label{eq:large-critical-separation}
 \lambda_{\min}^{+}(H_{n,0}^{\circ})
 <\frac{3n-5}{N_n}
 <\lambda_{\min}(H_{n,1}).
\end{equation}
Moreover,
\begin{equation}\label{eq:tau-less-three-over-n}
 \frac{3n-5}{N_n}<\frac3n,
\end{equation}
so Proposition~\ref{prop:lower-spin-bound} separates every $r\geqslant2$.

For $n=5,6,7,8$, we use exact rational thresholds.  Set
\begin{equation}\label{eq:theta-small}
 \theta_5=\frac{19}{2},
 \qquad
 \theta_6=12,
 \qquad
 \theta_7=15,
 \qquad
 \theta_8=18.
\end{equation}
Let
\begin{equation*}\label{eq:L-small}
 L_{n,0}=N_nH_{n,0}^{\circ},
 \qquad
 L_{n,1}=N_nH_{n,1}.
\end{equation*}
The exact leading principal determinants are listed in
Appendix~\ref{app:small}.  Their signs give the following result.

\begin{lemma}\label{lem:small-dimensions}
For $5\leqslant n\leqslant8$,
\begin{equation}\label{eq:small-separation}
 \lambda_{\min}^{+}(H_{n,0}^{\circ})
 <\frac{\theta_n}{N_n}
 <\lambda_{\min}(H_{n,1}).
\end{equation}
Furthermore,
\begin{equation}\label{eq:small-threshold-three-over-n}
 \frac{\theta_n}{N_n}<\frac3n.
\end{equation}
\end{lemma}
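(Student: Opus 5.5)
The argument is a Sturm-sequence computation with exact rational arithmetic, carried out separately for each $n\in\{5,6,7,8\}$. The inequality \eqref{eq:small-threshold-three-over-n} is immediate: multiplying by $N_n$, it reads $\theta_n<3(n-1)$, and the cases are $19/2<12$, $12<15$, $15<18$ and $18<21$.

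For the right-hand inequality in \eqref{eq:small-separation}, we show that $L_{n,1}-\theta_nI\succ0$. By \eqref{eq:H1-diagonal}--\eqref{eq:H1-off-diagonal}, this is an $(n-1)\times(n-1)$ Jacobi matrix with diagonal entries $q(3n-2q-1)-\theta_n$. The squared off-diagonal entries are $3q^3(n-q-1)$, which are integers. Hence the leading principal minors obey the integer (for $n=5$, half-integer) recurrence
\begin{equation*}
 D_k=\bigl(k(3n-2k-1)-\theta_n\bigr)D_{k-1}-3(k-1)^3(n-k)D_{k-2}.
\end{equation*}
We compute the at most seven minors explicitly and check that each is positive. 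Sylvester's criterion then gives $\lambda_{\min}(L_{n,1})>\theta_n$.

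For the left-hand inequality, we show that $L_{n,0}-\theta_nI$ has at least two negative eigenvalues. Since $L_{n,0}=N_nH_{n,0}^{\circ}$ has a simple zero eigenvalue and is positive semidefinite by Lemma~\ref{lem:highest-zero-mode} and \eqref{eq:global-h-lower}, one negative eigenvalue of $L_{n,0}-\theta_nI$ comes from the zero mode. A second negative eigenvalue forces $\lambda_{\min}^{+}(L_{n,0})<\theta_n$. The diagonal entries of $L_{n,0}-\theta_nI$ are $q(3n-2q-1)-\theta_n$ for $1\leqslant q\leqslant n$, and its squared off-diagonals are $3q^2(q+1)(n-q)$. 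We run the same determinant recurrence for $p_k=\det\bigl((L_{n,0}-\theta_nI)^{[k]}\bigr)$ with $k=0,\dots,n$, check that no $p_k$ vanishes, and count sign changes. The finite Sturm theorem then says that at least two sign changes give at least two eigenvalues below $\theta_n$. Because the off-diagonal entries of $L_{n,0}$ are non-zero, the Sturm theorem applies directly.

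The main obstacle is only the bookkeeping. We must verify that the chosen $\theta_n$ lie strictly inside each gap and that no intermediate minor vanishes. For $n=5$ the threshold $19/2$ is non-integral, so we would clear denominators by working with $2L_{n,r}-19I$. For $n=5$, where $\lfloor n/2\rfloor=2$, we also need the $r=2$ sector, but \eqref{eq:small-threshold-three-over-n} together with Proposition~\ref{prop:lower-spin-bound} covers it.
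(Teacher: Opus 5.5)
Your proposal is correct and follows the paper's proof essentially step for step. You use Sylvester's criterion on the leading minors of $L_{n,1}-\theta_nI$, and the Sturm sign-change count for $L_{n,0}-\theta_nI$ together with positive semidefiniteness and simplicity of the zero eigenvalue; the check $\theta_n<3(n-1)$ is immediate. Carrying out your determinant recurrences gives exactly the sequences in Appendix~\ref{app:small}: for $L_{n,0}-\theta_nI$ there are exactly two sign changes and no vanishing terms, and every minor of $L_{n,1}-\theta_nI$ is positive, so the only remaining work is that finite arithmetic.
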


\begin{proof}
For $L_{n,0}-\theta_nI$, the Sturm sequences in
Appendix~\ref{app:small} have exactly two sign changes.  Hence
$L_{n,0}$ has exactly two eigenvalues below $\theta_n$.  Its spectrum is
non-negative, and its zero eigenvalue is simple by
Lemma~\ref{lem:highest-zero-mode}.  The next eigenvalue is therefore
$N_n\lambda_{\min}^{+}(H_{n,0}^{\circ})$, which proves the left-hand inequality in
\eqref{eq:small-separation}.

For $L_{n,1}-\theta_nI$, every leading principal minor listed in the
appendix is positive.  Sylvester's criterion gives
$L_{n,1}-\theta_nI\succ0$, proving the right-hand inequality.
Finally, direct inspection of \eqref{eq:theta-small} gives
$\theta_n<3(n-1)$, which is equivalent to
\eqref{eq:small-threshold-three-over-n}.
\end{proof}

We can now complete the representation-theoretic part of the argument.

\begin{theorem}\label{thm:spin-localisation}
Let $n\geqslant5$.  Then
\begin{equation}\label{eq:all-sector-separation}
 \lambda_{\min}^{+}(H_{n,0}^{\circ})<\lambda_{\min}(H_{n,r})
 \qquad
 \left(1\leqslant r\leqslant\left\lfloor\frac n2\right\rfloor\right).
\end{equation}
Consequently, the smallest positive eigenvalue of $H_n=-A_n$ occurs in
the highest-spin sector $\cW_{n,0}$.
\end{theorem}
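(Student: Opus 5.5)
The proof assembles the lemmas already established, splitting into the two regimes $n\geqslant9$ and $5\leqslant n\leqslant8$. In each regime I fix a common threshold $\tau_n$. For $n\geqslant9$ take $\tau_n=(3n-5)/N_n$; for $5\leqslant n\leqslant8$ take $\tau_n=\theta_n/N_n$ from \eqref{eq:theta-small}. The goal is to show
\begin{equation*}
 \lambda_{\min}^{+}(H_{n,0}^{\circ})<\tau_n<\lambda_{\min}(H_{n,r})
\end{equation*}
for every $1\leqslant r\leqslant\lfloor n/2\rfloor$.

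The left inequality comes from Lemma~\ref{lem:highest-variational} when $n\geqslant9$ (it holds for $n\geqslant6$, which suffices), and from Lemma~\ref{lem:small-dimensions} when $5\leqslant n\leqslant 8$.

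For the right inequality I treat $r=1$ and $r\geqslant2$ separately.
\begin{itemize}
\item For $r=1$, use Lemma~\ref{lem:next-spin-Sturm} when $n\geqslant9$ and the right half of \eqref{eq:small-separation} when $n\leqslant8$.
\item For $r\geqslant2$ (this case is nonvacuous because $n\geqslant5$ gives $\lfloor n/2\rfloor\geqslant2$), Proposition~\ref{prop:lower-spin-bound} gives $\lambda_{\min}(H_{n,r})\geqslant 3/n$. It then suffices that $\tau_n<3/n$. This is \eqref{eq:tau-less-three-over-n}, which follows from $3(n-1)-(3n-5)=2>0$, and \eqref{eq:small-threshold-three-over-n} for the small cases.
\end{itemize}
Chaining the two inequalities proves \eqref{eq:all-sector-separation}.

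For the consequence, recall that $H_n$ preserves the Schur--Weyl summands, so its spectrum is the union of the spectra of the $H_{n,r}$, each with Specht multiplicity. By the local bound \eqref{eq:sector-lower-bound}, every $H_{n,r}$ with $r\geqslant1$ is strictly positive. The highest-spin block $H_{n,0}$ has kernel spanned by $\mathbf e_0^{(0)}$ and ${\phi}$ (Lemma~\ref{lem:highest-zero-mode}), and its positive spectrum coincides with that of $H_{n,0}^{\circ}$. Hence the smallest positive eigenvalue of $H_n$ is
\begin{equation*}
 \min\Bigl\{\lambda_{\min}^{+}(H_{n,0}^{\circ}),\ \min_{r\geqslant1}\lambda_{\min}(H_{n,r})\Bigr\},
\end{equation*}
which equals $\lambda_{\min}^{+}(H_{n,0}^{\circ})$ by \eqref{eq:all-sector-separation}. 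So this eigenvalue is attained only in $\cW_{n,0}$.

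There is no genuine obstacle, since the hard estimates are contained in the cited lemmas. The one point needing care is bookkeeping: the thresholds must be matched consistently across the ranges, with $n\geqslant9$ handled by the general lemmas and $5\leqslant n\leqslant8$ by the appendix certificates. One should also confirm that the strict inequalities survive at the $r\geqslant2$ comparison, which they do because $\tau_n<3/n$ is strict.
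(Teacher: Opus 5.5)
Your proposal is correct and follows the paper's proof essentially verbatim. It uses the same common thresholds, $(3n-5)/N_n$ for $n\geqslant9$ and $\theta_n/N_n$ for $5\leqslant n\leqslant8$, handles $r=1$ by Lemmas~\ref{lem:next-spin-Sturm} and~\ref{lem:small-dimensions}, and handles $r\geqslant2$ by Proposition~\ref{prop:lower-spin-bound} together with the strict bounds $\tau_n<3/n$, while your justification of the consequence is exactly the remark the paper makes at the end of Section~\ref{sec:spin-blocks}.
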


\begin{proof}
Suppose first that $n\geqslant9$.  The case $r=1$ follows from
\eqref{eq:large-critical-separation}.  If $r\geqslant2$, then
Lemma~\ref{lem:highest-variational} and
\eqref{eq:tau-less-three-over-n}, together with
Proposition~\ref{prop:lower-spin-bound}, give
\begin{equation}\label{eq:large-r-separation}
 \lambda_{\min}^{+}(H_{n,0}^{\circ})
 <\frac{3n-5}{N_n}<\frac3n\leqslant\lambda_{\min}(H_{n,r}).
\end{equation}

Now suppose that $5\leqslant n\leqslant8$.  Lemma~\ref{lem:small-dimensions}
handles $r=1$.  For $r\geqslant2$, equations
\eqref{eq:small-separation}, \eqref{eq:small-threshold-three-over-n} and
\eqref{eq:r-ge-two-bound} give
\begin{equation*}\label{eq:small-r-separation}
 \lambda_{\min}^{+}(H_{n,0}^{\circ})
 <\frac{\theta_n}{N_n}<\frac3n\leqslant\lambda_{\min}(H_{n,r}).
\end{equation*}
This proves \eqref{eq:all-sector-separation} in every case.
\end{proof}

\begin{corollary}\label{cor:A-highest-spin}
For $n\geqslant5$, the largest strictly negative eigenvalue of $A_n$ is
$-\lambda_{\min}^{+}(H_{n,0}^{\circ})$ and occurs in the highest-spin
sector $\cW_{n,0}$.  Every lower-spin sector has largest eigenvalue
strictly smaller than $-\lambda_{\min}^{+}(H_{n,0}^{\circ})$.
\end{corollary}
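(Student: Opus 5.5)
The plan is to deduce the corollary directly from Theorem~\ref{thm:spin-localisation}, translating eigenvalues of $H_n$ into eigenvalues of $A_n=-H_n$ sector by sector.

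First I describe the spectrum of $A_n$. By Schur--Weyl duality \eqref{eq:Schur-Weyl}, $A_n$ lies in the enveloping algebra of the collective $\mathrm{SU}(2)$ action. It therefore acts on $\cW_{n,r}\otimes S^{(n-r,r)}$ as $A_{n,r}\otimes I$, where $A_{n,r}=-H_{n,r}$. Hence
\begin{equation*}
 \spec(A_n)=\bigcup_{r=0}^{\lfloor n/2\rfloor}\spec(-H_{n,r}).
\end{equation*}
Since $H_n\succeq0$ by \eqref{eq:global-h-lower}, every eigenvalue of $A_n$ is non-positive. The strictly negative eigenvalues are the negatives of the strictly positive eigenvalues of the blocks $H_{n,r}$.

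Next I treat the highest-spin block. It decomposes as the isolated zero mode $\mathbf e_0^{(0)}$ together with $H_{n,0}^{\circ}$. By Lemma~\ref{lem:highest-zero-mode}, $H_{n,0}^{\circ}$ has a simple zero eigenvalue. Its positive eigenvalues therefore start at $\lambda_{\min}^{+}(H_{n,0}^{\circ})$. Consequently the largest strictly negative eigenvalue of $A_n$ restricted to $\cW_{n,0}$ equals $-\lambda_{\min}^{+}(H_{n,0}^{\circ})$.

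Then I treat the lower-spin sectors. For $r\geqslant1$, Proposition~\ref{prop:lower-spin-bound} gives $H_{n,r}\succ0$, so every eigenvalue of $H_{n,r}$ is positive. The largest eigenvalue of $-H_{n,r}$ is $-\lambda_{\min}(H_{n,r})$. By Theorem~\ref{thm:spin-localisation}, for $n\geqslant5$,
\begin{equation*}
 -\lambda_{\min}(H_{n,r})<-\lambda_{\min}^{+}(H_{n,0}^{\circ}).
\end{equation*}
This proves the final sentence of the corollary. It also shows that no strictly negative eigenvalue coming from the sectors $r\geqslant1$ exceeds $-\lambda_{\min}^{+}(H_{n,0}^{\circ})$.

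Finally, combining the last two steps identifies $-\lambda_{\min}^{+}(H_{n,0}^{\circ})$ as the largest strictly negative eigenvalue of $A_n$, attained in $\cW_{n,0}$. No step is a real obstacle, since all the analytic work is contained in Theorem~\ref{thm:spin-localisation}. The only care needed is bookkeeping: the isolated zero mode and the kernel vector $\phi$ contribute the eigenvalue $0$, not a negative eigenvalue, and the Specht multiplicity factors do not alter the spectrum as a set.
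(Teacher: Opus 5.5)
Your proposal is correct and follows the paper's proof. Both arguments read off the sector-wise spectra of $A_n=-H_n$, using $H_n\succeq0$, the zero modes of $H_{n,0}$, and $H_{n,r}\succ0$ for $r\geqslant1$, and then invoke Theorem~\ref{thm:spin-localisation}; your extra bookkeeping on the Specht multiplicities and the isolated zero mode just makes explicit what the paper's three-line argument leaves implicit.
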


\begin{proof}
The operator $A_n=-H_n$ is negative semidefinite.  In the highest-spin
sector its largest strictly negative eigenvalue is
$-\lambda_{\min}^{+}(H_{n,0}^{\circ})$, whilst in the $r$th lower-spin
sector its largest eigenvalue is $-\lambda_{\min}(H_{n,r})$.  The
conclusion follows from Theorem~\ref{thm:spin-localisation}.
\end{proof}

Corollary~\ref{cor:A-highest-spin} is precisely the strengthening of the
highest-spin lemma proposed in
\cite[the discussion following Conjecture~2]{KongLiLiu}.

\section{Proof of the complete-graph gap theorem}\label{sec:main-proof}

We now return to the moment operator.  Throughout this section, assume
that $a=5/9$, so that $0\leqslant c\leqslant1/3$ by
Lemma~\ref{lem:c-range}.

\subsection{Independence of the parameter \texorpdfstring{$c$}{c}}

On the sector indexed by $r$, equations
\eqref{eq:restricted-a-five-nine}, \eqref{eq:H-definition} and
\eqref{eq:B-scalar} give
\begin{equation}\label{eq:T-sector}
 \left.\TIG_{2,K_n}\right|_{\cW_{n,r}}
 =I-\frac49H_{n,r}
 -\frac{2cr(n-r+1)}{N_n}I.
\end{equation}
For $r=0$, the last term vanishes.  The two zero modes of $H_{n,0}$ give
the fixed space, and the largest eigenvalue strictly below $1$ in this
sector is
\begin{equation}\label{eq:highest-sector-candidate}
 1-\frac49\lambda_{\min}^{+}(H_{n,0}^{\circ}).
\end{equation}
For $r\geqslant1$, the largest eigenvalue of \eqref{eq:T-sector} is
\begin{equation}\label{eq:lower-sector-candidate}
 1-\frac49\lambda_{\min}(H_{n,r})
 -\frac{2cr(n-r+1)}{N_n}.
\end{equation}
By Theorem~\ref{thm:spin-localisation} and $c\geqslant0$, the value in
\eqref{eq:lower-sector-candidate} is strictly smaller than the value in
\eqref{eq:highest-sector-candidate}.  We have proved the following.

\begin{proposition}\label{prop:restricted-second-eigenvalue}
Let $n\geqslant5$, and suppose that $a=5/9$.  Then
\begin{equation*}\label{eq:restricted-lambda-star}
 \lamstar\!\left(
 \TIG_{2,K_n}\big|_{\cV^{\otimes n}}
 \right)
 =1-\frac49\lambda_{\min}^{+}(H_{n,0}^{\circ}).
\end{equation*}
In particular, this eigenvalue is independent of $c$ and is attained in
the highest-spin sector.
\end{proposition}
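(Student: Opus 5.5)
The plan is to read the restricted operator sector by sector through the Schur--Weyl decomposition \eqref{eq:Schur-Weyl}, determine its fixed space and its largest eigenvalue strictly below $1$ in each sector, and then compare these candidates using Theorem~\ref{thm:spin-localisation}. Since $A_n$ lies in the enveloping algebra of the collective $\mathrm{SU}(2)$ action and $B_n$ lies in the centre of the group algebra of $\Sn_n$, the restricted operator $\TIG_{2,K_n}|_{\cV^{\otimes n}}$ is block diagonal. On each summand $\cW_{n,r}\otimes S^{(n-r,r)}$ it acts as $(\text{operator on }\cW_{n,r})\otimes I$. Its spectrum is therefore the union over $r$ of the spectra of the sector operators in \eqref{eq:T-sector}, with multiplicities only. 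The multiplicity factors play no role in locating $\lamstar$.

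The first step is to identify the unit eigenspace. For $r=0$, Lemma~\ref{lem:B-scalar} kills the $c$-term, so the sector operator is $I-\frac49H_{n,0}$. Its unit eigenvectors are exactly $\ker H_{n,0}$. This kernel is spanned by the isolated vector $\mathbf e_0^{(0)}$ and by $\phi$ from Lemma~\ref{lem:highest-zero-mode}, and it is two-dimensional because the zero eigenvalue of $H_{n,0}^{\circ}$ is simple. All other eigenvalues in this sector have the form $1-\frac49\mu$ with $\mu$ a positive eigenvalue of $H_{n,0}^{\circ}$. The largest of them is \eqref{eq:highest-sector-candidate}, and it is strictly below $1$.

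For $r\geqslant1$, Proposition~\ref{prop:lower-spin-bound} gives $H_{n,r}\succeq \frac{3r(n-r+1)}{2N_n}I\succ0$. Together with $c\geqslant0$ from Lemma~\ref{lem:c-range}, this shows that every eigenvalue of \eqref{eq:T-sector} is strictly below $1$. The largest such eigenvalue is \eqref{eq:lower-sector-candidate}. Consequently the unit eigenspace is nonzero, and $\lamstar$ is well defined as in Definition~\ref{def:lambda-star}.

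The final comparison goes as follows. By Theorem~\ref{thm:spin-localisation}, $\lambda_{\min}(H_{n,r})>\lambda_{\min}^{+}(H_{n,0}^{\circ})$ for each $1\leqslant r\leqslant\lfloor n/2\rfloor$. Since also $-\frac{2cr(n-r+1)}{N_n}\leqslant0$, we get
\[
1-\tfrac49\lambda_{\min}(H_{n,r})-\tfrac{2cr(n-r+1)}{N_n}
<1-\tfrac49\lambda_{\min}^{+}(H_{n,0}^{\circ}).
\]
Taking the maximum over all sub-unit eigenvalues then yields the stated formula. The right-hand side involves only $H_{n,0}^{\circ}$, so it does not depend on $c$, and it is attained in $\cW_{n,0}$.

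There is no real obstacle in this argument, since all the analytic work sits in Theorem~\ref{thm:spin-localisation}. The only points needing care are that the fixed space is exactly $\ker H_{n,0}$ and is untouched by lower sectors, which relies on the strict positivity of $H_{n,r}$ for $r\geqslant1$, and the sign $c\geqslant0$, which is what makes the $c$-term only lower the competitors.
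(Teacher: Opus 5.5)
Your proposal is correct and follows the paper's own argument: write the sector operator as in \eqref{eq:T-sector}, identify the unit eigenspace with $\ker H_{n,0}$ in the highest-spin sector, and compare \eqref{eq:lower-sector-candidate} with \eqref{eq:highest-sector-candidate} using Theorem~\ref{thm:spin-localisation} and $c\geqslant0$. Your added remarks — that the Specht multiplicities are irrelevant, and that strict positivity of $H_{n,r}$ for $r\geqslant1$ confines the fixed space to $\cW_{n,0}$ so that $\lamstar$ is well defined — only make explicit what the paper leaves implicit.
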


\begin{remark}\label{rem:strictness}
The strict comparison in Theorem~\ref{thm:spin-localisation} is stronger
than is needed when $c>0$, because the scalar $B_n$ term then creates an
additional separation.  It is indispensable at $c=0$, which includes the
CNOT gadget and is the least favourable case for excluding lower-spin
competitors.
\end{remark}

\subsection{The complementary subspace}

The restriction to $\cV^{\otimes n}$ does not, by itself, determine the
spectrum of the full moment operator.  We combine the following two
results of Kong, Li, and Liu.

\begin{proposition}[{\cite[Lemma~4.13 and Corollary~4.14]{KongLiLiu}}]\label{prop:KLL-complement}
Let $T^{\mathrm{IG}}_{2,G}$ be the second-moment operator of an ironed
gadget on a graph with $n$ vertices, and let $T^{\mathrm{IG}}_2$ be its
local two-qubit second-moment operator.  If
\begin{equation*}\label{eq:KLL-condition}
 n\geqslant\frac{2}{1+\lambda_{\min}(T^{\mathrm{IG}}_2)},
\end{equation*}
then
\begin{equation*}\label{eq:KLL-lambda-reduction}
 \lamstar(T^{\mathrm{IG}}_{2,G})
 =\lamstar\!\left(
 T^{\mathrm{IG}}_{2,G}\big|_{\cV^{\otimes n}}
 \right).
\end{equation*}
Under the same hypothesis, the spectral gap is determined by this
eigenvalue:
\begin{equation}\label{eq:KLL-gap-reduction}
 \Delta(T^{\mathrm{IG}}_{2,G})
 =1-\lamstar(T^{\mathrm{IG}}_{2,G}).
\end{equation}
\end{proposition}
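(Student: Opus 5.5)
The plan is to split $\cH_n^{\otimes 2}$, or rather its vectorised endomorphism space, into orthogonal pieces that $\TIG_{2,G}$ preserves, and to locate the whole spectrum of the full operator relative to the restriction to $\cV^{\otimes n}$. Write the one-site space as $\cV\oplus\cV^{\perp}$. For each subset $S$ of vertices, the sector $\cV^{\perp}_S\otimes\cV^{\otimes(V\setminus S)}$ is invariant under every edge term. On it, an edge term with an endpoint in $S$ acts as zero, because the gadget is sandwiched between one-qubit Haar twirls. These twirls are the orthogonal projections onto $\cV$, so they kill any $\cV^{\perp}$ factor. An edge term inside $V\setminus S$ acts as $T_2^{\mathrm{IG}}$ on its two sites tensored with the identity. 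Hence, writing $E_S$ for the edges of $G-S$ and $p_S=|E_S|/|E|$, the restriction of $\TIG_{2,G}$ to this sector is $p_S$ times the average over $E_S$ of local operators. Each local operator has spectrum in $[\lambda_{\min}(T_2^{\mathrm{IG}}),1]$. So for $S\neq\emptyset$ the spectrum on that sector lies in $[p_S\lambda_{\min}(T_2^{\mathrm{IG}}),\,p_S]$, and this interval contains no eigenvalue equal to $1$ since $p_S<1$ for a connected graph.

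The second step is a lower bound on $\lamstar$ of the restricted operator $T|_{\cV^{\otimes n}}$. I will use a one-site excitation trial vector. Take the fixed vector $\mathbf u_0^{\otimes n}$ and replace one factor by $\mathbf u_1$, averaging uniformly over the choice of site. One checks that this vector is orthogonal to the unit eigenspace, after projecting away that eigenspace if needed, as in Lemma~\ref{lem:highest-variational}. Its Rayleigh quotient is $1$ minus the expected fraction of edges meeting the excited site, up to the local matrix entries. Since $\sum_v\deg v=2|E|$, this fraction averages to $2/n$, so I expect $\lamstar(T|_{\cV^{\otimes n}})\geqslant 1-2/n$. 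The hypothesis $n\geqslant 2/(1+\lambda_{\min}(T_2^{\mathrm{IG}}))$ is exactly $1-2/n\geqslant-\lambda_{\min}(T_2^{\mathrm{IG}})$.

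Assuming these two steps, the conclusions would follow in two parts:
\begin{itemize}
\item \emph{Gap formula.} Every eigenvalue of the full operator is at least $\lambda_{\min}(T_2^{\mathrm{IG}})$, since it is an average of operators with that lower bound, or $p_S$ times one. Hence $|\lambda_{\min}(\TIG_{2,G})|\leqslant\max\{0,-\lambda_{\min}(T_2^{\mathrm{IG}})\}\leqslant 1-2/n\leqslant\lamstar$. This gives \eqref{eq:KLL-gap-reduction}.
\item \emph{Equality of $\lamstar$.} The unit eigenspace lies inside $\cV^{\otimes n}$. It remains to show that the positive complement eigenvalues, at most $p_S$, do not exceed $\lamstar(T|_{\cV^{\otimes n}})$.
\end{itemize}

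The main obstacle is this last comparison, because $p_S=1-(\text{edges meeting }S)/|E|$ need not be at most $1-2/n$ for an irregular graph; a leaf vertex is the typical problem. What I would try first is a transfer argument. Take a top eigenvector $\mathbf x$ of a complement sector with eigenvalue $\mu=\lambda_{\max}$ of that sector. Replace each $\cV^{\perp}$ factor at the sites of $S$ by the excitation $\mathbf u_1$ to obtain a vector $\mathbf y\in\cV^{\otimes n}$ that is orthogonal to the fixed space. Then compare Rayleigh quotients edge by edge. Edges in $E_S$ contribute identically to the two quotients. The remaining edges contribute $0$ for $\mathbf x$, and for $\mathbf y$ they contribute a diagonal-type term of the local matrix. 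I would need these terms to be non-negative, which is where a one-excitation diagonal entry such as $1-c-3b$ and the bound on $\lambda_{\min}(T_2^{\mathrm{IG}})$ would enter. If that edge-by-edge comparison fails, I would fall back on bounding $\mu\leqslant p_S$ against the trial vector excited at the sites of $S$ themselves rather than averaged, which gives exactly $1-(\text{edges meeting }S)/|E|$ up to the same local correction.
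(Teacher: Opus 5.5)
The paper does not prove this proposition. It imports it from Kong, Li and Liu and only ever applies it with $G=K_n$. Measured against the statement as written, for an arbitrary graph, your proposal has a genuine gap, and it is the one you flag yourself.

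Equality of the two values of $\lamstar$ requires $\lamstar\bigl(\TIG_{2,G}|_{\cV^{\otimes n}}\bigr)\geqslant\max_{S\neq\emptyset}p_S=1-\delta_{\min}(G)/|E|$, where $\delta_{\min}(G)$ is the minimum degree. This maximum is actually attained: any $\mathbf x\in\cV^{\perp}$ at a minimum-degree vertex $v$, tensored with $\mathbf u_0^{\otimes(n-1)}$, is an eigenvector with eigenvalue $p_{\{v\}}$. So there is no slack.

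Your transfer vector $\mathbf y=\mathbf u_1^{\otimes S}\otimes\mathbf u_0^{\otimes(V\setminus S)}$ is orthogonal to $\mathbf u_0^{\otimes n}$. It is not orthogonal to the second fixed vector $(\mathbf u_0+\sqrt3\,\mathbf u_1)^{\otimes n}$; their inner product is $3^{|S|/2}$. Projecting that component away strictly lowers the Rayleigh quotient. For $S=\{v\}$ the unprojected quotient is $1-\deg(v)(c+3b)/|E|$. Because $1-c-3b=\frac1{12}\bigl[(1+x)(1+y)+(1+y)(1+z)+(1+z)(1+x)\bigr]\geqslant0$, this is at least $p_{\{v\}}$. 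Equality holds exactly when $c+3b=1$, which at $a=5/9$ means $c=1/3$, the iSWAP gadget. In that case the projected quotient lies strictly below $p_{\{v\}}$. So both your main argument and your fallback (which uses the same vector) fail in precisely the case the paper most needs. For irregular graphs you would need a genuinely richer trial space, and the proposal does not supply one.

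Several parts of your argument do go through.

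\begin{itemize}
\item Your sector decomposition is correct, and the gap formula $\Delta=1-\lamstar$ follows as you say.
\item It follows even more directly than you argue. The complement eigenvector above gives $\lamstar(\TIG_{2,G})\geqslant1-\delta_{\min}(G)/|E|\geqslant1-2/n$, since the minimum degree is at most the average degree $2|E|/n$. No trial vector or projection is needed. This is evidently the origin of the hypothesis, which says exactly $1-2/n\geqslant-\lambda_{\min}(\TIG_2)$.
\item For regular graphs, and in particular for $K_n$, the obstacle disappears, because then $\max_{S\neq\emptyset}p_S=1-2/n$. Your averaged one-excitation vector $\mathbf e_1^{(0)}$ has quotient exactly $1-6b/n$ on every graph. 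After projecting off ${\phi}$ this becomes $1-\frac{6b}{n}\cdot\frac{4^n-1}{4^n-1-3n}\geqslant1-\frac2n$ for $n\geqslant3$, using $b=(1-a)/2\leqslant2/9$.
\end{itemize}

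So your argument establishes everything the paper actually uses from this proposition, but not the proposition in the generality stated.
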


The local one-qubit Haar projectors annihilate the orthogonal complement
of $\cV$ at either endpoint.  Thus the full local moment operator has the
four eigenvalues of \eqref{eq:local-matrix}, together with zero
eigenvalues.  Lemma~\ref{lem:c-range} therefore gives
\begin{equation*}\label{eq:full-local-min}
 \lambda_{\min}(T^{\mathrm{IG}}_2)\geqslant-\frac13.
\end{equation*}
It follows that
\begin{equation}\label{eq:condition-at-most-three}
 \frac{2}{1+\lambda_{\min}(T^{\mathrm{IG}}_2)}\leqslant3.
\end{equation}
In particular, the hypothesis of Proposition~\ref{prop:KLL-complement}
holds whenever $n\geqslant5$.

\begin{theorem}\label{thm:main}
Let $n\geqslant5$, and let $\TIG_{2,K_n}$ be the second-moment operator of an
ironed gadget with parameter $a=5/9$.  Then
\begin{equation}\label{eq:gap-gamma}
 \Delta(\TIG_{2,K_n})=\frac49\lambda_{\min}^{+}(H_{n,0}^{\circ}),
\end{equation}
where $\lambda_{\min}^{+}(H_{n,0}^{\circ})$ is the first positive eigenvalue of the explicit
Jacobi matrix in \eqref{eq:H0-diagonal}--\eqref{eq:H0-off-diagonal}.
Consequently,
\begin{equation}\label{eq:gap-equality-iSWAP}
 \Delta(\TIG_{2,K_n})
 =\Delta(\TiSWAP_{2,K_n}).
\end{equation}
\end{theorem}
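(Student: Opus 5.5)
The proof combines Proposition~\ref{prop:restricted-second-eigenvalue} with the complementary-subspace reduction of Proposition~\ref{prop:KLL-complement}.

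First, verify the hypothesis of Proposition~\ref{prop:KLL-complement}. By Lemma~\ref{lem:c-range}, the local two-qubit moment operator has eigenvalues $1,1,-1/9,1/3-2c$ on $\cV\otimes\cV$, together with zeros on the complement. Hence $\lambda_{\min}(\TIG_2)\geqslant-1/3$, and by \eqref{eq:condition-at-most-three} the threshold $2/(1+\lambda_{\min}(\TIG_2))$ is at most $3<5\leqslant n$. Proposition~\ref{prop:KLL-complement} then gives $\lamstar(\TIG_{2,K_n})=\lamstar(\TIG_{2,K_n}|_{\cV^{\otimes n}})$, and \eqref{eq:KLL-gap-reduction} gives $\Delta(\TIG_{2,K_n})=1-\lamstar(\TIG_{2,K_n})$.

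Two points need checking before the definition of $\lamstar$ applies. The unit eigenspace must be non-empty: the zero modes of $H_{n,0}$, namely $\mathbf e_0^{(0)}$ and $\phi$ from Lemma~\ref{lem:highest-zero-mode}, give eigenvalue $1$. There must also be an eigenvalue below $1$: $H_{n,0}^{\circ}$ is $n\times n$ with a simple zero eigenvalue, so $\lambda_{\min}^{+}(H_{n,0}^{\circ})$ exists for $n\geqslant 2$.

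Next, apply Proposition~\ref{prop:restricted-second-eigenvalue}, which rests on Theorem~\ref{thm:spin-localisation} and the non-negativity of $c$. It gives $\lamstar=1-\tfrac49\lambda_{\min}^{+}(H_{n,0}^{\circ})$, and substituting this yields \eqref{eq:gap-gamma}. The right-hand side depends only on $n$ through the explicit Jacobi matrix \eqref{eq:H0-diagonal}--\eqref{eq:H0-off-diagonal}; it does not involve $c$ or the particular KAK coordinates.

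Finally, the iSWAP gadget itself has $a=5/9$. With KAK coordinates $(\pi/4,\pi/4,0)$ we get $x=y=-1$ and $z=1$, so $xy+xz+yz=-1$. The first part therefore applies to iSWAP as well, and both gaps equal $\tfrac49\lambda_{\min}^{+}(H_{n,0}^{\circ})$, which proves \eqref{eq:gap-equality-iSWAP}.

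The main point needing care is the precise form of Proposition~\ref{prop:KLL-complement}: its conclusion includes that $|\lambda_{\min}(\TIG_{2,K_n})|\leqslant\lamstar$ under the stated hypothesis. I rely on it as quoted. No further estimate is required beyond checking that the condition holds with margin, which it does since the threshold is at most $3$.
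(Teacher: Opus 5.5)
Your proposal is correct and follows the paper's proof essentially step for step. You verify the hypothesis of Proposition~\ref{prop:KLL-complement} via Lemma~\ref{lem:c-range}, combine it with Proposition~\ref{prop:restricted-second-eigenvalue}, and note that iSWAP has $a=5/9$, so the $c$-independent formula applies to it too; your extra check that $\lamstar$ is well defined is a harmless addition.
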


\begin{proof}
By \eqref{eq:condition-at-most-three},
Proposition~\ref{prop:KLL-complement} applies.  Combining it with
Proposition~\ref{prop:restricted-second-eigenvalue} gives
\begin{equation*}\label{eq:full-lambda-star}
 \lamstar(\TIG_{2,K_n})=1-\frac49\lambda_{\min}^{+}(H_{n,0}^{\circ}).
\end{equation*}
The gap reduction \eqref{eq:KLL-gap-reduction} now yields
\eqref{eq:gap-gamma}.

The iSWAP gate has KAK coordinates
$(k_x,k_y,k_z)=(\pi/4,\pi/4,0)$ in the convention
\eqref{eq:KAK}.  Hence
\begin{equation*}\label{eq:iSWAP-xyz}
 (x,y,z)=(-1,-1,1),
\end{equation*}
which gives $a=5/9$ and $c=1/3$.  Formula \eqref{eq:gap-gamma} is
independent of $c$, so it has the same value for every gadget with
$a=5/9$ and, in particular, for iSWAP.  This proves
\eqref{eq:gap-equality-iSWAP}.
\end{proof}

Theorem~\ref{thm:main} is Theorem~\ref{thm:intro-main} and settles
Conjecture~2 of~\cite{KongLiLiu}.

\begin{corollary}\label{cor:standard-gates}
For every $n\geqslant5$, the ironed gadgets associated with iSWAP, the
$B$ gate, and CNOT have equal second-moment spectral gaps on $K_n$.
\end{corollary}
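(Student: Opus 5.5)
The corollary is a direct specialisation of Theorem~\ref{thm:main}, so the plan is to check its hypothesis for each of the three gates and then compare all three with the iSWAP gadget. Theorem~\ref{thm:main} applies to any ironed gadget whose KAK parameter is $a=5/9$. It gives $\Delta(\TIG_{2,K_n})=\frac49\lambda_{\min}^{+}(H_{n,0}^{\circ})$ for every $n\geqslant5$. The right-hand side depends only on $n$, not on $c$ or on the particular gate. So the only work is to show that the iSWAP, $B$-gate, and CNOT gadgets each have $a=5/9$.

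To do this, I would write down standard KAK coordinates in the convention \eqref{eq:KAK}, compute $x,y,z=\cos(4k_\bullet)$, and substitute into \eqref{eq:a-parameter}. Up to local unitaries, which are absorbed by the Haar twirls, the coordinates are as follows:
\begin{itemize}
\item CNOT has $(k_x,k_y,k_z)=(\pi/4,0,0)$, so $(x,y,z)=(-1,1,1)$. Then $xy+xz+yz=-1$, hence $a=5/9$ and $c=0$.
\item The $B$ gate has $(k_x,k_y,k_z)=(\pi/4,\pi/8,0)$, so $(x,y,z)=(-1,0,1)$. Again $xy+xz+yz=-1$, hence $a=5/9$, and here $c=1/6$.
\item iSWAP was already treated in the proof of Theorem~\ref{thm:main}, with $(x,y,z)=(-1,-1,1)$ and $c=1/3$.
\end{itemize}
All three values of $c$ lie in the range of Lemma~\ref{lem:c-range}, as they must.

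The only point needing care is the normalisation convention for the KAK coordinates: factors of $2$ can differ between references. I would fix it by checking that CNOT equals $\exp(i\frac\pi4\sigma_x\otimes\sigma_x)$ up to local unitaries and a global phase, which is standard. The $B$-gate coordinates are half of iSWAP's in the $y$-slot. Even if a reference uses a different sign or permutation of the coordinates, $xy+xz+yz$ is symmetric and $\cos$ is even, so $a$ is unchanged.

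With $a=5/9$ established for all three gates, Theorem~\ref{thm:main} applied to each gives the same value $\frac49\lambda_{\min}^{+}(H_{n,0}^{\circ})$ for each of the three gaps on $K_n$, so all three equal the iSWAP gap. I expect no genuine obstacle; the step that needs care is only the correct identification of KAK coordinates.
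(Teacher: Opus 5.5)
Your proposal is correct and follows the paper's proof: both apply Theorem~\ref{thm:main} once $a=5/9$ is known for each gate. The only difference is that you verify $a=5/9$ (with $c=0,\,1/6,\,1/3$) directly from the KAK coordinates, which are right, whereas the paper simply cites Table~3 of Kong, Li and Liu~\cite{KongLiLiu}.
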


\begin{proof}
Each of these gate families has $a=5/9$; see
\cite[Table~3 and Conjecture~2]{KongLiLiu}.  The conclusion follows from
Theorem~\ref{thm:main}.
\end{proof}


\appendix

\section{Exact Sturm certificates in dimensions five to eight}
\label{app:small}

For completeness, we give the exact data used in
Lemma~\ref{lem:small-dimensions}.  Recall that
\begin{equation*}\label{eq:appendix-L}
 L_{n,0}=N_nH_{n,0}^{\circ},
 \qquad
 L_{n,1}=N_nH_{n,1},
\end{equation*}
and that the thresholds are
\begin{equation*}\label{eq:appendix-theta}
 \theta_5=\frac{19}{2},
 \qquad
 \theta_6=12,
 \qquad
 \theta_7=15,
 \qquad
 \theta_8=18.
\end{equation*}

For $L_{n,0}-\theta_nI$, let
\begin{equation*}\label{eq:Ek-appendix}
 E_k^{(n)}
 =\det\bigl((L_{n,0}-\theta_nI)^{[k]}\bigr),
 \qquad E_0^{(n)}=1.
\end{equation*}
The complete sequences are
\begin{align*}
 n=5:\quad
 &(1,\tfrac52,\tfrac94,-\tfrac{1899}{8},
 -\tfrac{62847}{16},\tfrac{503253}{32}),
 \\
 n=6:\quad
 &(1,3,12,-180,-8208,-102384,1850688),
 \\
 n=7:\quad
 &(1,3,15,-135,-10935,-285525,419175,227174625),
 \\
 n=8:\quad
 &(1,3,18,-54,-11988,-511596,-8372808,
 396756792,24923180304).
\end{align*}
Their sign patterns are respectively
\begin{equation*}\label{eq:E-signs}
 +++--+,
 \qquad
 +++---+,
 \qquad
 +++---++,
 \qquad
 +++----++.
\end{equation*}
Each pattern has exactly two sign changes.  Since no term vanishes, the
finite Sturm theorem shows that $L_{n,0}$ has exactly two eigenvalues below
$\theta_n$.

For $L_{n,1}-\theta_nI$, let
\begin{equation*}\label{eq:Qk-appendix}
 Q_k^{(n)}
 =\det\bigl((L_{n,1}-\theta_nI)^{[k]}\bigr).
\end{equation*}
The sequences of leading principal minors are
\begin{align*}
 n=5:\quad
 &(\tfrac52,\tfrac{69}{4},\tfrac{1041}{8},\tfrac{7833}{16}),
 \\
 n=6:\quad
 &(3,30,414,5076,37260),
 \\
 n=7:\quad
 &(3,36,684,13824,221184,2115072),
 \\
 n=8:\quad
 &(3,42,1026,29484,794772,16036056,206610264).
\end{align*}
Every entry is positive.  Hence $L_{n,1}-\theta_nI$ is positive definite
by Sylvester's criterion.  These rational certificates establish
\begin{equation*}\label{eq:appendix-conclusion}
 N_n\lambda_{\min}^{+}(H_{n,0}^{\circ})<\theta_n<N_n\lambda_{\min}(H_{n,1})
 \qquad (5\leqslant n\leqslant8)
\end{equation*}
without numerical approximation.

\bigskip
\noindent{\bf Acknowledgements.}
Y. L. was supported by the Guangdong Basic and Applied Basic Research Foundation (2026A1515011707).
H.Z. gratefully acknowledges support from the NTU Research Scholarship.


\begin{thebibliography}{99}

\bibitem{BaerHaah}
T. Baer and J. Haah,
\emph{Random unitary circuits with constant spectral gap},
arXiv:2607.20919 [quant-ph], 2026.

\bibitem{BrandaoHarrowHorodecki}
F.G.S.L. Brand\~ao, A.W. Harrow and M. Horodecki,
\emph{Local random quantum circuits are approximate polynomial-designs},
Commun. Math. Phys. \textbf{346} (2016), 397--434.

\bibitem{BrownViola}
W.G. Brown and L. Viola,
\emph{Convergence rates for arbitrary statistical moments of random quantum circuits},
Phys. Rev. Lett. \textbf{104} (2010), 250501.

\bibitem{CaputoLiggettRichthammer}
P. Caputo, T.M. Liggett and T. Richthammer,
\emph{Proof of Aldous' spectral gap conjecture},
J. Amer. Math. Soc. \textbf{23} (2010), 831--851.

\bibitem{DankertCleveEmersonLivine}
C. Dankert, R. Cleve, J. Emerson and E. Livine,
\emph{Exact and approximate unitary \(2\)-designs and their application to fidelity estimation},
Phys. Rev. A \textbf{80} (2009), 012304.

\bibitem{DenerisBermejoBracciaCincioCerezo}
A.E. Deneris, P. Bermejo, P. Braccia, L. Cincio and M. Cerezo,
\emph{Exact spectral gaps of random one-dimensional quantum circuits},
Phys. Rev. A \textbf{112} (2025), 062619.

\bibitem{DinizJonathan}
I.T. Diniz and D. Jonathan,
\emph{Comment on the paper ``Random Quantum Circuits are Approximate
\(2\)-designs''},
Commun. Math. Phys. \textbf{304} (2011), 281--293.

\bibitem{FultonHarris}
W. Fulton and J. Harris,
\emph{Representation Theory: A First Course},
Graduate Texts in Mathematics, vol. 129, Springer-Verlag, New York, 1991.

\bibitem{GoodmanWallach}
R. Goodman and N.R. Wallach,
\emph{Symmetry, Representations, and Invariants},
Graduate Texts in Mathematics, vol. 255, Springer, Dordrecht, 2009.

\bibitem{GrossAudenaertEisert}
D. Gross, K. Audenaert and J. Eisert,
\emph{Evenly distributed unitaries: on the structure of unitary designs},
J. Math. Phys. \textbf{48} (2007), 052104.

\bibitem{Haferkamp}
J. Haferkamp,
\emph{Random quantum circuits are approximate unitary \(t\)-designs in depth \(O(nt^{5+o(1)})\)},
Quantum \textbf{6} (2022), 795.

\bibitem{HaferkampHunterJones}
J. Haferkamp and N. Hunter-Jones,
\emph{Improved spectral gaps for random quantum circuits: large local dimensions and all-to-all interactions},
Phys. Rev. A \textbf{104} (2021), 022417.

\bibitem{HarrowLow}
A.W. Harrow and R.A. Low,
\emph{Random quantum circuits are approximate \(2\)-designs},
Commun. Math. Phys. \textbf{291} (2009), 257--302.

\bibitem{HarrowMehraban}
A.W. Harrow and S. Mehraban,
\emph{Approximate unitary \(t\)-designs by short random quantum circuits using nearest-neighbor and long-range gates},
Commun. Math. Phys. \textbf{401} (2023), 1531--1626.

\bibitem{HornJohnson}
R.A. Horn and C.R. Johnson,
\emph{Matrix Analysis}, 2nd ed.,
Cambridge University Press, Cambridge, 2013.

\bibitem{HunterJones}
N. Hunter-Jones,
\emph{Unitary designs from statistical mechanics in random quantum circuits},
arXiv:1905.12053 [quant-ph], 2019.

\bibitem{KhanejaBrockettGlaser}
N. Khaneja, R. Brockett and S.J. Glaser,
\emph{Time optimal control in spin systems},
Phys. Rev. A \textbf{63} (2001), 032308.

\bibitem{KongLiLiu}
L. Kong, Z. Li and Z.-W. Liu,
\emph{Convergence efficiency of quantum gates and circuits},
arXiv:2411.04898 [quant-ph], 2024.

\bibitem{LiangZhu2026}
Y. Liang and H. Zhu,
\emph{iSWAP maximises the second-moment spectral gap in random quantum circuits},
in preparation, 2026.

\bibitem{Sagan}
B.E. Sagan,
\emph{The Symmetric Group: Representations, Combinatorial Algorithms, and Symmetric Functions},
2nd ed., Graduate Texts in Mathematics, vol. 203, Springer-Verlag, New York, 2001.

\bibitem{Teschl}
G. Teschl,
\emph{Jacobi Operators and Completely Integrable Nonlinear Lattices},
Mathematical Surveys and Monographs, vol. 72, American Mathematical Society, Providence, RI, 2000.

\bibitem{YadaSuzukiMitsuhashiYoshioka}
T. Yada, R. Suzuki, Y. Mitsuhashi and N. Yoshioka,
\emph{Non-Haar random circuits form unitary designs as fast as Haar random circuits},
Phys. Rev. Lett. \textbf{136} (2026), 030401.

\bibitem{ZhangValaSastryWhaley}
J. Zhang, J. Vala, S. Sastry and K.B. Whaley,
\emph{A geometric theory of non-local two-qubit operations},
Phys. Rev. A \textbf{67} (2003), 042313.

\end{thebibliography}
\end{document}